\newcommand{\num}[1]{\relax\ifmmode \mathbb #1\else $\mathbb #1$\fi}
\newcommand{\nnreals}{{\num R}_{\geq 0}}
\newcommand{\oor}{\vee}
\newcommand{\aand}{\wedge}
\newcommand{\until}{\hspace{1mm}\mathcal{U}\hspace{1mm}}
\newcommand{\always}{\square}
\newcommand{\eventually}{\Diamond}
\renewcommand{\next}{\ocircle}
\newcommand{\true}{\mathit{True}}
\newcommand{\A}{{\mathcal{A}}}
\newcommand{\C}{{\mathcal{C}}}
\newcommand{\D}{{\mathcal{D}}}
\newcommand{\lang}{\mathcal{L}_\omega}
\newcommand{\M}{{\mathcal{M}}}
\newcommand{\T}{{\mathcal{T}}}
\newcommand{\U}{{\mathcal{U}}}
\newcommand{\V}{{\mathcal{V}}}
\newcommand{\prob}{\mathrm{Pr}}
\newtheorem{lemma}{Lemma}
\newtheorem{proposition}{Proposition}
\newtheorem{corollary}{Corollary}
\newtheorem{definition}{Definition}
\newtheorem{remark}{Remark}
\newtheorem{example}{Example}
\newtheorem{problem}{Problem}
\title{\LARGE \bf
Control of Probabilistic Systems under Dynamic, Partially Known Environments
with Temporal Logic Specifications
}
\author{Tichakorn Wongpiromsarn and Emilio Frazzoli\vspace{-5mm}
\thanks{Tichakorn Wongpiromsarn is with the Singapore-MIT Alliance for Research and Technology,
Singapore 117543, Singapore. {\tt\footnotesize nok@smart.mit.edu}}
\thanks{Emilio  Frazzoli is with the Massachusetts Institute of Technology, Cambridge, MA 02139, USA.
  {\tt\footnotesize frazzoli@mit.edu}}%
}
\begin{document}
\maketitle
\thispagestyle{empty}
\pagestyle{empty}

\begin{abstract}
We consider the synthesis of control policies for probabilistic systems, modeled by Markov
decision processes, operating in partially known environments with temporal logic specifications.
The environment is modeled by a set of Markov chains.
Each Markov chain describes the behavior of the environment in each mode.
The mode of the environment, however, is not known to the system.
Two control objectives are considered:
maximizing the expected probability and maximizing the worst-case probability 
that the system satisfies a given specification.
\end{abstract}

\section{Introduction}
In many applications, control systems need to perform complex tasks and interact with their (potentially adversarial) environments.
The correctness of these systems typically depends on the behaviors of the environments. 
For example, whether an autonomous vehicle exhibits a correct behavior at a pedestrian crossing depends on
the behavior of the pedestrians, e.g., whether they actually cross the road, remain on the same side of the road
or step in front of the vehicle while it is moving.

Temporal logics, which were primarily developed by the formal methods community 
for specifying and verifying correctness of software and hardware systems, 
have been recently employed to express complex behaviors of control systems.
Its expressive power offers extensions to properties that can be expressed
than safety and stability, typically studied in the control and hybrid systems domains.
In particular, \cite{WKF11-ITSC} shows that the traffic rule enforced in the 2007 DARPA Urban Challenge can be precisely described using these logics.
Furthermore, the recent development of language equivalence and simulation notions allows abstraction of continuous systems to a purely discrete model \cite{alur00,tanner02,Girard09Hierarchical}.
This subsequently provides a framework for integrating methodologies from the formal methods and the control-theoretic communities and enables formal specification, design and verification of control systems with complex behaviors. 

Controller synthesis from temporal logic specifications has been considered in
\cite{Tabuada04lineartime,Kloetzer08,karaman09cdc,Bhatia:SMP2010}, assuming static environments.
Synthesis of reactive controllers that takes into account all the possible behaviors of dynamic environments 
can be found in \cite{KGFP_TRO09,wongpiromsarn10hscc}.
In this case, the environment is treated as an adversary and the synthesis problem can be viewed as a two-player game 
between the system and the environment: the environment attempts to falsify the specification while the system attempts to satisfy it
\cite{piterman06}.
In these works, the system is assumed to be deterministic, i.e., an available control action in each state enables exactly one transition.
Controller synthesis for probabilistic systems such as Markov decision processes (MDP) has been considered in \cite{Ding:LTL2011,Ding:MDP2011}.
However, these works assume that at any time instance, the state of the system, including the environment, as well as
their models are fully known.
This may not be a valid assumption in many applications.
For example, in the pedestrian crossing problem previously described, the behavior of the pedestrians depend on their destination,
which is typically not known to the system.
Having to account for all the possible behaviors of the pedestrians with respect to all their possible destinations 
may lead to conservative results and in many cases, unrealizable specifications.

Partially observable Markov decision process (POMDP) provides a principled mathematical framework to cope with 
partial observability in stochastic domains \cite{Kaelbling:POMDP1998}.
Roughly, the main idea is to maintain a belief, which is defined as a probability distribution over all the possible states.
POMDP algorithms then operate in the belief space.
Unfortunately, it has been shown that solving POMDPs exactly is computationally intractable \cite{Papadimitriou:CMD1987}.
Hence, point-based algorithms have been developed to compute an approximate solution based on
the computation over a representative set of points from the belief space rather than the entire belief space 
\cite{Hauskrecht:VAP2000}.

In this paper, we take an initial step towards solving POMDPs that are subject to temporal logic specifications.
In particular, we consider the problem where a collection of possible environment models is available to the system.
Different models correspond to different modes of the environment.
However, the system does not know in which mode the environment is.
In addition, the environment may change its mode during an execution subject to certain constraints.
We consider two control objectives: maximizing the expected probability and maximizing the worst-case probability 
that the system satisfies a given temporal logic specification.
The first objective is closely related to solving POMDPs as previously described whereas
the second objective is closely related to solving uncertain MDPs \cite{Nilim:RCM2005}.
However, for both problems, we aim at maximizing the expected or worst-case probability of satisfying
a temporal logic specification, instead of maximizing the expected or worst-case reward as considered in the POMDP and MDP literature.

The main contribution of this paper is twofold.
First, we show that the expectation-based synthesis problem can be formulated 
as a control policy synthesis problem for MDPs under temporal logic specifications
and provide a complete solution to the problem.
Second, we define a mathematical object called \emph{adversarial Markov decision process} (AMDP) and show
that the worst-case-based synthesis problem can be formulated as a control policy synthesis problem for AMDP.
A complete solution to the control policy synthesis for AMDP is then provided.
Finally, we show that the maximum worst-case probability that a given specification is satisfied
does not depend on whether the controller and the adversary play alternatively or 
both the control and adversarial policies are computed at the beginning of an execution.

The rest of the paper is organized as follows: 
We provide useful definitions and descriptions of the formalisms in the following section. 
Section \ref{sec:prob} is dedicated to the problem formulation.
The expectation-based and the worst-case-based control policy synthesis are considered in
Section \ref{sec:syn-MDP} and Section \ref{sec:syn-AMDP}, respectively.
Section \ref{sec:ex} presents an example.
Finally, Section \ref{sec:conclusions} concludes the paper and discusses future work.

\section{Preliminaries}
We consider systems that comprise stochastic components.
In this section, we define the formalisms used in this paper to describe such systems and their desired properties.
Throughout the paper, we let $X^*$, $X^\omega$ and $X^+$ denote the set of 
finite, infinite and nonempty finite strings, respectively, of a set $X$.

\subsection{Automata}
\begin{definition} 
A \emph{deterministic Rabin automaton}  (DRA) is a tuple
$\A = (Q, \Sigma, \delta, q_{init}, Acc)$ where
\begin{itemize}
\item $Q$ is a finite set of states,
\item $\Sigma$ is a finite set called alphabet,
\item $\delta: Q \times \Sigma \to Q$ is a transition function,
\item $q_{init} \in Q$ is the initial state, and
\item $Acc \subseteq 2^Q \times 2^Q$ is the acceptance condition.
\end{itemize}
We use the relation notation, $q \stackrel{w}{\longrightarrow} q'$
to denote $\delta(q,w) = q'$.
\end{definition}

Consider an infinite string $\sigma = \sigma_0\sigma_1 \ldots \in \Sigma^\omega$.
A \emph{run} for $\sigma$
in a DRA $\A = (Q, \Sigma, \delta, q_{init}, Acc)$ is an infinite sequence of states
$q_0q_1\ldots q_n$ such that $q_0 = q_{init}$ and 
$q_i \stackrel{\sigma_i}{\longrightarrow} q_{i+1}$ for all $i \geq 0$.
A run is \emph{accepting} if there exists a pair $(H,K) \in Acc$ such that
(1) there exists $n \geq 0$ such that for all $m \geq n$, $q_m \not\in H$, and
(2) there exist infinitely many $n \geq 0$ such that $q_n \in K$.

A string $\sigma \in \Sigma^*$ is \emph{accepted} by $\A$ if there is an accepting run of $\sigma$ in $\A$.
The language \emph{accepted} by $\A$, denoted by $\lang(\A)$, is the set of all accepted strings of $\A$.

\subsection{Linear Temporal Logic}
Linear temporal logic (LTL) is a branch of logic that can be used to reason about a time line.
An LTL formula is built up from a set $\Pi$ of atomic propositions,
the logic connectives $\neg$, $\oor$, $\aand$ and $\Longrightarrow$
and the temporal modal operators 
$\next$ (``next''), $\always$ (``always''), $\eventually$ (``eventually'') and $\until$ (``until'').
An LTL formula over a set $\Pi$ of atomic propositions is inductively defined as
\[
  \varphi := \true \hspace{1mm}|\hspace{1mm} p \hspace{1mm}|\hspace{1mm} \neg \varphi \hspace{1mm}|\hspace{1mm}
   \varphi \aand \varphi \hspace{1mm}|\hspace{1mm} \next \varphi \hspace{1mm}|\hspace{1mm} \varphi \until \varphi 
\]
where $p \in \Pi$.
Other operators can be defined as follows: 
$\varphi \aand \psi = \neg(\neg\varphi \oor \neg\psi)$, 
$\varphi \Longrightarrow \psi = \neg \varphi \oor \psi$,
$\eventually\varphi = \true \until \varphi$, and
$\always\varphi = \neg\eventually\neg\varphi$.

\textit{\textbf{Semantics of LTL:}}
LTL formulas are interpreted on infinite strings over $2^{\Pi}$.
Let $\sigma = \sigma_0 \sigma_1 \sigma_2 \ldots$ where $\sigma_i \in 2^{\Pi}$ for all $i \geq 0$.
The	satisfaction relation $\models$	is defined inductively on LTL formulas as follows:
\begin{itemize}
\item $\sigma \models \true$,
\item for an atomic proposition $p \in \Pi$, $\sigma \models p$ if and only if $p \in \sigma_0$,
\item $\sigma \models \neg \varphi$ if and only if $\sigma \not\models \varphi$,
\item $\sigma \models \varphi_1 \aand \varphi_2$ if and only if $\sigma \models \varphi_1$ and $\sigma \models \varphi_2$,
\item $\sigma \models \next\varphi$ if and only if $\sigma_1 \sigma_2\ldots \models \varphi$, and
\item $\sigma \models \varphi_1 \until \varphi_2$ if and only if there exists $j \geq 0$ such that 
$\sigma_j \sigma_{j+1} \ldots \models \varphi_2$ and for all $i$ such all $0 \leq i < j$, $\sigma_i \sigma_{i+1}\ldots \models \varphi_1$.
\end{itemize}

Given propositional formulas $p_1$ and $p_2$,
examples of widely used LTL formulas include 
a safety formula $\always p_1$ (read as ``always $p_1$''), which simply asserts that property $p_1$ remains invariantly true throughout an execution
and 
a reachability formula $\eventually p_1$ (read as ``eventually $p_1$''), which states that property $p_1$ becomes true at least once in an execution
(i.e., there exists a reachable state that satisfies $p_1$).
In the example presented later in this paper, we use a formula $p_1 \until p_2$ (read as ``$p_1$ until $p_2$''), which
asserts that $p_1$ has to remain true until $p_2$ becomes true and there is some point in an execution where $p_2$ becomes true.

It can be shown that for any LTL formula $\varphi$ over $\Pi$, 
there exists a DRA $\A$ with alphabet $\Sigma = 2^\Pi$ that accepts all and only words over $\Pi$ that satisfy $\varphi$,
i.e., $\lang(\A) = \{\sigma \in (2^\Pi)^\omega \hspace{1mm}|\hspace{1mm} \sigma \models \varphi\}$.
Such $\A$ can be automatically constructed using existing tools \cite{Klein:EDF2006}.
We refer the reader to \cite{Baier:PMC2008,Allen:TML90,zohar92} for more details on LTL.

\subsection{Systems and Control Policies}
\label{ssec:policies}

\begin{definition}
	A \emph{(discrete-time) Markov chain} (MC) is a tuple $\M = (S, \mathbf{P}, s_{init}, \Pi, L)$ where
	\begin{itemize}
	\item $S$ is a countable set of states,
	\item $\mathbf{P} : S \times S \to [0,1]$ is the transition probability function such that for any state $s \in S$,
	$\sum_{s' \in S} \mathbf{P}(s, s') = 1$,
	\item $s_{init} \in S$ is the initial state,
	\item $\Pi$ is a set of atomic propositions, and
	\item $L : S \to 2^\Pi$ is a labeling function.
	\end{itemize}
\end{definition}

\begin{definition}
	A \emph{Markov decision process} (MDP) is a tuple
	$\M = (S, Act, \mathbf{P}, s_{init}, \Pi, L)$ where
	$S$, $s_{init}$, $\Pi$ and $L$ are defined as in MC and
	\begin{itemize}
	\item $Act$ is a finite set of actions, and
	\item $\mathbf{P} : S \times Act \times S \to [0,1]$ is the transition probability function such that for any state $s \in S$ and action $\alpha \in Act$,
	$\sum_{s' \in S} \mathbf{P}(s, \alpha, s') \in \{0, 1\}$ 
	\end{itemize}

	An action $\alpha$ is \emph{enabled} in state $s$ if and only if $\sum_{s' \in S} \mathbf{P}(s, \alpha, s') = 1$.
	Let $Act(s)$ denote the set of enabled actions in $s$.
\end{definition}

Given a complete system as the composition of all its components, we are interested in computing
a control policy for the system that optimizes certain objectives.
We define a control policy for a system modeled by an MDP as follows.

\begin{definition}
	Let $\M = (S, Act, \mathbf{P}, s_{init}, \Pi, L)$ be a Markov decision process.
	A \emph{control policy} for $\M$ is a function $\C : S^+ \to Act$
	such that $\C(s_0s_1\ldots s_n) \in Act(s_n)$ for all $s_0s_1\ldots s_n \in S^+$.
\end{definition}

Let $\M = (S, Act, \mathbf{P}, s_{init}, \Pi, L)$ be an MDP
and $\C : S^+ \to Act$ be a control policy for $\M$.
An infinite sequence $r_\M^\C = s_0 s_1 \ldots$ on $\M$ generated under policy $\C$
is called a \emph{path} on $\M$ if $s_0 = s_{init}$ and
$\mathbf{P}(s_i, \C(s_0s_1\ldots s_i), s_{i+1}) > 0$ for all $i$.
The subsequence $s_0s_1 \ldots s_n$ where $n \geq 0$ is
the \emph{prefix} of length $n$ of $r_\M^\C$.
We define $Paths_\M^\C$ and $FPaths_\M^\C$ as the set of all infinite paths of $\M$
under policy $\C$ and their finite prefixes, respectively.
For $s_0s_1\ldots s_n \in FPaths_\M^\C$, we let $Paths_\M^\C(s_0s_1\ldots s_n)$
denote the set of all paths in $Paths_\M^\C$ with prefix $s_0s_1 \ldots s_n$.

The $\sigma$-algebra associated with $\M$ under policy $\C$
is defined as the smallest $\sigma$-algebra that contains $Paths_\M^\C(\hat{r}_\M^\C)$
where $\hat{r}_\M^\C$ ranges over all finite paths in $FPaths_\M^\C$.
It follows that there exists a unique probability measure $Pr_\M^\C$ on the $\sigma-$algebra 
associated with $\M$ under policy $\C$ where for any $s_0s_1\ldots s_n \in FPaths_\M^\C$,
\begin{equation}
\label{eq:prob-measure}
  \begin{array}{l}
  \prob_\M^\C\{Paths_\M^\C(s_0s_1\ldots s_n)\} = \\
  \hspace{10mm}\prod_{0 \leq i < n} \mathbf{P}(s_i, \C(s_0s_1\ldots s_i), s_{i+1}).
  \end{array}
\end{equation}

Given an LTL formula $\varphi$, one can show that the set 
$\{s_0 s_1 \ldots \in Paths_\M^\C \hspace{1mm}|\hspace{1mm} L(s_0) L(s_1) \ldots \models \varphi\}$
is measurable \cite{Baier:PMC2008}.
The probability for $\M$ to satisfy $\varphi$ under policy $\C$ is then defined as
\begin{equation*}
  \prob_\M^\C(\varphi) = \prob_\M^\C\{s_0 s_1 \ldots \in Paths_\M^\C \hspace{1mm}|\hspace{1mm}
  L(s_0) L(s_1) \ldots \models \varphi\} .
\end{equation*}

For a given (possibly noninitial) state $s \in S$, we let $\M^s = (S, Act, \mathbf{P}, s, \Pi, L)$,
i.e., $\M^s$ is the same as $\M$ except that its initial state is $s$.
We define $\prob_\M^\C(s \models \varphi) = \prob_{\M^s}^\C(\varphi)$ 
as the probability for $\M$ to satisfy $\varphi$ under policy $\C$, starting from $s$.

A control policy essentially resolves all the nondeterministic choices in an MDP and induces a Markov chain $\M_\C$
that formalizes the behavior of $\M$ under control policy $\C$ \cite{Baier:PMC2008}.
In general, $\M_\C$ contains all the states in $S^+$ and hence may not be finite even though $\M$ is finite.
However, for a special case where $\C$ is a memoryless or a finite memory control policy, 
it can be shown that $\M_\C$ can be identified with a finite MC.
Roughly, a memoryless control policy always picks the action based only on the current state of $\M$,
regardless of the path that led to that state.
In contrast, a finite memory control policy also maintains its ``mode'' and picks the action based on its current mode 
and the current state of $\M$.

\section{Problem Formulation}
\label{sec:prob}
Consider a system that comprises 2 components: the plant and the environment.
The system can regulate the state of the plant but
has no control over the state of the environment.
We assume that at any time instance, the state of the plant and the environment can be
precisely observed.

The plant is modeled by a finite MDP $\M^{pl} = (S^{pl}, Act, \mathbf{P}^{pl}, s_{init}^{pl}, \Pi^{pl}, L^{pl})$.
We assume that for each $s \in S^{pl}$, there is an action $\alpha \in Act$ that is enabled in state $s$.
In addition, we assume that the environment can be modeled by some MC in
$\mathbf{M}^{env} = \{\M^{env}_1, \M^{env}_2, \ldots, \M^{env}_N\}$
where for each $i \in \{1, \ldots, N\}$, 
$\M^{env}_i =  (S^{env}_i, \mathbf{P}^{env}_i, s_{init, i}^{env}, \Pi^{env}_i, L^{env}_i)$
is a finite MC that represents a possible model of the environment.
For the simplicity of the presentation, we assume that for all $i \in \{1, \ldots, N\}$,
$S^{env}_i  = S^{env}$, $\Pi^{env}_i = \Pi^{env}$, $s_{init,i}^{env} = s_{init}^{env}$ and $L^{env}_i = L^{env}$;
hence, $\M^{env}_1, \M^{env}_2, \ldots, \M^{env}_N$ differ only in the transition probability function.
These different environment models can be considered as different \emph{modes} of the environment.
For the rest of the paper, we use ``environment model'' and ``environment mode'' interchangeably
to refer to some $\M^{env}_i \in \mathbf{M}^{env}$.

We further assume that the plant and the environment make a transition simultaneously, i.e., 
both of them makes a transition at every time step.
All $\M^{env}_i \in \mathbf{M}^{env}$ are available to the system.
However, the system does not know exactly which $\M^{env}_i \in \mathbf{M}^{env}$ is the actual model of the environment.
Instead, it maintains the belief $\mathbf{B}: \mathbf{M}^{env} \to [0, 1]$, 
which is defined as a probability distribution over all possible environment models such that
$\sum_{1 \leq i \leq N} \mathbf{B}(\M^{env}_i) = 1$.
$\mathbf{B}(\M^{env}_i)$ returns the probability that $\M^{env}_i$ is the model being executed by the environment.
The set of all the beliefs forms the belief space, which we denote by $\mathbb{B}$.
In order to obtain the belief at each time step, the system is given the initial belief
$\mathbf{B}_{init} : \M^{env} \to [0, 1] \in \mathbb{B}$.
Then, it subsequently updates the belief using a given belief update function
$\tau : \mathbb{B} \times S^{env} \times S^{env} \to \mathbb{B}$ such that
$\tau(\mathbf{B}, s, s')$ returns the belief after the environment makes a transition from state
$s$ with belief $\mathbf{B}$ to state $s'$.
The belief update function can be defined based on the observation function 
as in the belief MDP construction for POMDPs  \cite{Kaelbling:POMDP1998}.

In general, the belief space $\mathbb{B}$ may be infinite, rendering the control policy synthesis 
computationally intractable to solve exactly.
To overcome this difficulty, we employ techniques for solving POMDPs and
approximate $\mathbb{B}$ by a finite set of representative points from $\mathbb{B}$
and work with this approximate representation instead.
Belief space approximation is beyond the scope of this paper and is subject to future work.
Sampling techniques that have been proposed in the POMDP literature can be found in, e.g., 
\cite{Hauskrecht:VAP2000,Pineau:PVI2003,Kurniawati:SARSOP2008}.

Given a system model described by $\M^{pl}$, $\mathbf{M}^{env} =  \{\M^{env}_1, \M^{env}_2, \ldots, \M^{env}_N\}$,
the (finite) belief space $\mathbb{B}$, the initial belief $\mathbf{B}_{init}$, the belief update function $\tau$
and an LTL formula $\varphi$ that describes the desired property of the system,
we consider the following control policy synthesis problems.

\begin{problem}
\label{prob:expected}
Synthesize a control policy for the system that maximizes the expected 
probability that the system satisfies $\varphi$ where the expected probability
that the environment transitions from state $s \in S^{env}$ with belief $\mathbf{B} \in \mathbb{B}$ to state $s' \in S^{env}$ is given by
$\sum_{1 \leq i \leq N} \mathbf{B}(\M^{env}_i)\mathbf{P}^{env}_i(s, s')$.
\end{problem}

\begin{problem}
\label{prob:worst-case}
Synthesize a control policy for the system that maximizes the worst-case (among all the possible sequences of environment modes) 
probability that the system satisfies $\varphi$.
The environment mode may change during an execution:
when the environment is in state $s \in S^{env}$ with belief $\mathbf{B} \in \mathbb{B}$,
it may switch to any mode $\M^{env}_i \in \mathbf{M}^{env}$
with $\mathbf{B}(\M^{env}_i) > 0$.
We consider both the case where the controller and the environment plays a sequential game
and the case where the control policy and the sequence of environment modes are computed
before an execution.
\end{problem}

\begin{example}
\label{ex:ped-models}
Consider a problem where an autonomous vehicle needs to navigate a road
with a pedestrian walking on the pavement.
The vehicle and the pedestrian are considered the plant and the environment, respectively.
The pedestrian may or may not cross the road, depending on his/her destination, which is unknown to the system.
Suppose the road is discretized into a finite number of cells $c_0, c_2, \ldots, c_M$.
The vehicle is modeled by an MDP 
$\M^{pl} = (S^{pl}, Act, \mathbf{P}^{pl}, s_{init}^{pl}, \Pi^{pl}, L^{pl})$
whose state $s \in S^{pl}$ describes the cell occupied by the vehicle 
and whose action $\alpha \in Act$ corresponds to a motion primitive of the vehicle (e.g., cruise, accelerate, decelerate).
The motion of the pedestrian is modeled by an MC $\M^{env}_i \in \mathbf{M}^{env}$ where
$\mathbf{M}^{env} =  \{\M^{env}_1, \M^{env}_2\}$.
$\M^{env}_1 =  (S^{env}, \mathbf{P}^{env}_1, s_{init}^{env}, \Pi^{env}, L^{env})$
represents the model of the pedestrian if s/he decides not to cross the road whereas
$\M^{env}_2 =  (S^{env}, \mathbf{P}^{env}_2, s_{init}^{env}, \Pi^{env}, L^{env})$
represents the model of the pedestrian if s/he decides to cross the road.
A state $s \in S^{env}$ describes the cell occupied by the pedestrian.
The labeling functions $L^{pl}$ and $L^{env}$ essentially maps each cell to its label,
with an index that identifies the vehicle from the pedestrian,
i.e., $L^{pl}(c_j) = c_{j}^{pl}$ and $L^{env}(c_j) = c_{j}^{env}$ for all $j \in \{0, \ldots M\}$.
Consider the desired property stating that the vehicle does not collide with the pedestrian until
it reaches cell $c_M$ (e.g., the end of the road).
In this case, the specification $\varphi$ can be written as
$\varphi = \left(\neg \bigvee_{j \geq 0} (c^{pl}_j \aand c^{env}_j)\right) \until c^{pl}_M$.
\end{example}

\section{Expectation-Based Control Policy Synthesis}
\label{sec:syn-MDP}
To solve Problem \ref{prob:expected}, we first construct the MDP that represents the complete system, 
taking into account the uncertainties, captured by the belief, in the environment model.
Then, we employ existing results in probabilistic verification and construct the product MDP
and extract its optimal control policy.
In this section, we describe these steps in more detail and discuss their connection to Problem \ref{prob:expected}.

\subsection{Construction of the Complete System}
\label{ssec:comp}
Based on the notion of belief,
we construct the complete environment model, represented by the MC
$\M^{env} = (S^{env} \times \mathbb{B}, \mathbf{P}^{env}, \langle s_{init}^{env}, \mathbf{B}_{init} \rangle, \Pi^{env}, L^{env'})$
where for each $s, s' \in S^{env}$ and $\mathbf{B}, \mathbf{B}' \in \mathbb{B}$, 
	\begin{equation}
		\hspace{-2mm}
		\mathbf{P}^{env}(\langle s, \mathbf{B} \rangle, \langle s', \mathbf{B}' \rangle) = 
		\left\{\begin{array}{ll}
		\vspace{-2mm}
		\displaystyle{\sum_{i} \mathbf{B}(\M^{env}_i)\mathbf{P}^{env}_i(s, s')}\\
		&\hspace{-25mm}\hbox{if } \tau(\mathbf{B}, s, s') = \mathbf{B}'\\
		0 &\hspace{-25mm}\hbox{otherwise}
		\end{array}\right.,
	\end{equation}
and $L^{env'}(s, \mathbf{B}) = L^{env}(s)$.

It is straightforward to check that for all $\langle s, \mathbf{B} \rangle \in S^{env} \times \mathbb{B}$,
$\sum_{s', \mathbf{B}'}  \mathbf{P}^{env}(\langle s, \mathbf{B} \rangle, \langle s', \mathbf{B}' \rangle) = 1$.
Hence, $\M^{env}$ is a valid MC.

Assuming that the plant and the environment make a transition simultaneously,
we obtain the complete system by constructing the synchronous parallel composition of the plant and the environment.
Synchronous parallel composition of MDP and MC is defined as follows.

\begin{definition}
	Let $\M_1 = (S_1, Act, \mathbf{P}_1, s_{init,1}, \Pi_1, L_1)$ be a Markov decision process and
	$\M_2 = (S_2, \mathbf{P}_2, s_{init,2}, \Pi_2, L_2)$ be a Markov chain.
	Their synchronous parallel composition,
	denoted by $\M_1 || \M_2$, 
	is the MDP $\M = (S_1 \times S_2, Act, \mathbf{P}, \langle s_{init,1}, s_{init,2} \rangle, \Pi_1 \cup \Pi_2, L)$ where:
	\begin{itemize}
	\item For each $s_1, s_1' \in S_1$, $s_2, s_2' \in S_2$ and $\alpha \in Act$,
	$\mathbf{P}(\langle s_1, s_2 \rangle, \alpha, \langle s_1', s_2' \rangle) = \mathbf{P}_1(s_1, \alpha, s_1')\mathbf{P}_2(s_2, s_2')$.
	\item For each $s_1 \in S_1$ and $s_2 \in S_2$, $L(\langle s_1, s_2 \rangle) = L(s_1) \cup L(s_2)$.
	\end{itemize}
\end{definition}

From the above definitions, our complete system can be modeled by the MDP $\M^{pl} || \M^{env}$.
We denote this MDP by $\M = (S, Act, \mathbf{P}, s_{init}, \Pi, L)$.
Note that a state $s \in S$ is of the form $s = \langle s^{pl}, s^{env}, \mathbf{B} \rangle$
where $s^{pl} \in S^{pl}$, $s^{env} \in S^{env}$ and $\mathbf{B} \in \mathbb{B}$.
The following lemma shows that Problem \ref{prob:expected} can be solved by finding a control policy $\C$ for $\M$
that maximizes $\prob_{\M}^{\C}(\varphi)$.

\begin{lemma}
Let $r_{\M}^{\C} = s_0s_1\ldots s_n$ be a finite path of $\M$ under policy $\C$
where for each $i$, $s_i = \langle s^{pl}_i ,s^{env}_i, \mathbf{B}_i \rangle \in S^{pl} \times S^{env} \times \mathbb{B}$.
Then,
\begin{equation}
\begin{array}{l}
\prob_{\M}^{\C}\{Path_{\M}^{\C}(r_{\M}^{\C})\} = \\
\hspace{10mm}
\prod_{0 \leq j < n} \Big(\mathbf{P}(s^{pl}_j, \C(s_0s_1\ldots s_j), s^{pl}_{j+1}) \\
\hspace{10mm}
\sum_{1 \leq i \leq N} \mathbf{B}_j(\M^{env}_i)\mathbf{P}^{env}_i(s^{env}_j, s^{env}_{j+1}) \Big)
\end{array}
\end{equation}
Hence, given an LTL formula $\varphi$, $\prob_{\M}^{\C}(\varphi)$ gives
the expected probability that the system satisfies $\varphi$ under policy $\C$.
\end{lemma}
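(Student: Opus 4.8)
The plan is to start from the definition of the probability measure in \eqref{eq:prob-measure} and simply unfold the two compositions out of which $\M$ is built. First I would apply \eqref{eq:prob-measure} to the finite path $r_\M^\C = s_0 s_1 \ldots s_n$, obtaining $\prob_\M^\C\{Paths_\M^\C(r_\M^\C)\} = \prod_{0 \le j < n} \mathbf{P}(s_j, \C(s_0\ldots s_j), s_{j+1})$. Then I would rewrite each factor using $\M = \M^{pl} \| \M^{env}$: by the definition of synchronous parallel composition, with $\alpha = \C(s_0\ldots s_j)$, the quantity $\mathbf{P}(\langle s^{pl}_j, s^{env}_j, \mathbf{B}_j\rangle, \alpha, \langle s^{pl}_{j+1}, s^{env}_{j+1}, \mathbf{B}_{j+1}\rangle)$ factors as $\mathbf{P}^{pl}(s^{pl}_j, \alpha, s^{pl}_{j+1})\cdot\mathbf{P}^{env}(\langle s^{env}_j, \mathbf{B}_j\rangle, \langle s^{env}_{j+1}, \mathbf{B}_{j+1}\rangle)$, where $\mathbf{P}^{env}$ is the transition function of the complete environment MC constructed at the top of Section~\ref{sec:syn-MDP}.

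Next I would dispose of the case split in the definition of $\mathbf{P}^{env}$. Because $r_\M^\C$ is a \emph{path} of $\M$ under $\C$, every transition along it has strictly positive probability; hence for each $j$ the ``otherwise'' branch ($\mathbf{P}^{env} = 0$) cannot occur, so necessarily $\tau(\mathbf{B}_j, s^{env}_j, s^{env}_{j+1}) = \mathbf{B}_{j+1}$ and $\mathbf{P}^{env}(\langle s^{env}_j, \mathbf{B}_j\rangle, \langle s^{env}_{j+1}, \mathbf{B}_{j+1}\rangle) = \sum_{1 \le i \le N} \mathbf{B}_j(\M^{env}_i)\mathbf{P}^{env}_i(s^{env}_j, s^{env}_{j+1})$. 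Substituting this into the product yields the displayed identity verbatim (noting that the ``$\mathbf{P}$'' appearing in the first factor of the statement's right-hand side is the plant transition probability $\mathbf{P}^{pl}$).

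For the second assertion, I would first recall that, by the definition in Problem~\ref{prob:expected}, the \emph{expected} probability of a one-step environment transition from $\langle s, \mathbf{B}\rangle$ to $s'$ is exactly $\sum_{1 \le i \le N}\mathbf{B}(\M^{env}_i)\mathbf{P}^{env}_i(s, s')$. Thus the identity just proved says precisely that the $\M$-probability of the cylinder $Paths_\M^\C(r_\M^\C)$ equals the product over steps of (plant transition probability) times (expected environment transition probability) --- i.e.\ $\M$ under $\C$ realizes exactly the ``expected'' dynamics of the system. Since the $\sigma$-algebra associated with $\M$ under $\C$ is generated by the cylinder sets $Paths_\M^\C(\hat r)$, which form a $\pi$-system, and a probability measure is determined by its values on such a generating system, $\prob_\M^\C$ is the unique measure consistent with the expected dynamics; evaluating it on the measurable event $\{s_0 s_1 \ldots \in Paths_\M^\C \mid L(s_0)L(s_1)\ldots \models \varphi\}$ gives that $\prob_\M^\C(\varphi)$ is the expected probability that the system satisfies $\varphi$ under $\C$.

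I expect the only real subtlety --- rather than a genuine obstacle --- to be making this last paragraph precise: pinning down what ``expected probability that the system satisfies $\varphi$'' means (it must be read through the one-step definition of Problem~\ref{prob:expected} together with the belief-update convention $\tau$) and invoking uniqueness of extension from cylinders to the full $\sigma$-algebra, so that matching cylinder probabilities is enough. The algebraic steps --- unfolding $\|$ and the definition of $\mathbf{P}^{env}$ --- are routine; the care is entirely in handling the case split along a positive-probability path and in the measure-theoretic uniqueness argument.
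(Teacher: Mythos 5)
Your proposal is correct and follows exactly the route the paper intends: the paper's own proof is the one-line remark that the identity ``straightforwardly follows from the definition of $\M^{env}$ and $\M$,'' and you simply unfold those definitions --- \eqref{eq:prob-measure}, the synchronous composition $\M^{pl}\|\M^{env}$, and the two-branch definition of $\mathbf{P}^{env}$. Your added care about the ``otherwise'' branch being excluded along a positive-probability path, and the cylinder-set uniqueness argument for the second assertion, are details the paper omits but are consistent with (and strengthen) the same argument.
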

\begin{proof}
The proof straightforwardly follows from the definition of $\M^{env}$ and $\M$.
\end{proof}

\subsection{Construction of the Product MDP}
\label{ssec:pMDP}
Let $\A_\varphi = (Q, 2^\Pi, \delta, q_{init}, Acc)$ be a DRA that recognizes the specification $\varphi$.
Our next step is to obtain a finite MDP $\M_p = (S_p, Act_p, \mathbf{P}_p, s_{p,init}, \Pi_p, L_p)$ as the product of $\M$ and $\A_\varphi$,
defined as follows.

\begin{definition} 
	Let $\M = (S, Act, \mathbf{P}, s_{init}, \Pi, L)$ be an MDP and
	let $\A = (Q, 2^\Pi, \delta, q_{init}, Acc)$ be a DRA.
	Then, the product of $\M$ and $\A$ is the MDP $\M_p = \M \otimes \A$ defined by
	$\M_p = (S_p, Act, \mathbf{P}_p, s_{p,init}, \Pi_p, L_p)$
	where $S_p = S \times Q$, 
	$s_{p,init} = \langle s_{init}, \delta(q_{init}, L(s_{init}) \rangle$,
	$\Pi_p = Q$,
	 $L_p(\langle s,q \rangle)= \{q\}$
	and
	\begin{equation}
	\hspace{-1mm}
	\begin{array}{rcl}
		\mathbf{P}_p( \langle s,q \rangle, \alpha, \langle s',q' \rangle) &\hspace{-2mm}=&\hspace{-2mm} 
		\left\{ \begin{array}{ll} \mathbf{P}(s, \alpha,s') &\hbox{if } q' = \delta(q, L(s'))\\
		0 &\hbox{otherwise} \end{array}\right.
	\end{array}
	\end{equation}
\end{definition}

Consider a path
$r_{\M_p}^{\C_p} = \langle s_0,q_0 \rangle \langle s_1,q_1 \rangle \ldots$ of $\M_p$ under some control policy $\C_p$.
We say that $r_{\M_p}^{\C_p}$ is accepting
if and only if there exists a pair $(H, K) \in Acc$ such that the word generated by $r_{\M_p}^{\C_p}$
intersects with $H$ finitely many times and intersects with $K$ infinitely many times, i.e.,
(1) there exists $n \geq 0$ such that for all $m \geq n$, $L_p(\langle s_m,q_m \rangle) \cap H = \emptyset$, and
(2) there exists infinitely many $n \geq 0$ such that $L_p (\langle s_n,q_n \rangle) \cap K \not= \emptyset$.

Stepping through the above definition shows that 
given a path $r_{\M_p}^{\C_p} = \langle s_0,q_0 \rangle \langle s_1,q_1 \rangle \ldots$ of $\M_p$
generated under some control policy $\C_p$, 
the corresponding path $s_0s_1\ldots$ on $\M$ generates a word
$L(s_0) L(s_1) \ldots$ that satisfies $\varphi$ if and only if $r_{\M_p}^{\C_p}$ is accepting.
Therefore, each accepting path of $\M_p$ uniquely corresponds to a path of $\M$
whose word satisfies $\varphi$.
In addition, a control policy $\C_p$ on $\M_p$ induces a corresponding control policy
$\C$ on $\M$.
The details for generating $\C$ from $\C_p$ can be found, e.g., in \cite{Baier:PMC2008,Ding:LTL2011}.

\subsection{Control Policy Synthesis for Product MDP}
\label{ssec:policy}

From probabilistic verification, it has been shown that 
the maximum probability for $\M$ to satisfy $\varphi$ is 
equivalent to the maximum probability of reaching a certain set of states of $\M_p$
known as \emph{accepting maximal end components} (AMECs).
An \emph{end component} of the product MDP
$\M_p = (S_p, Act_p, \mathbf{P}_p, s_{p,init}, \Pi_p, L_p)$ is a pair
$(T,A)$ where $\emptyset \not= T \subseteq S_p$ and $A : T \to 2^{Act_p}$ such that
(1) $\emptyset \not= A(s) \subseteq Act_p(s)$ for all $s \in T$,
(2) the directed graph induced by $(T,A)$ is strongly connected, and
(3) for all $s \in T$ and $\alpha \in A(s)$, $\{t \in S_p \hspace{1mm}|\hspace{1mm} \mathbf{P}_p(s, \alpha, t) > 0\} \subseteq T$.
An \emph{accepting maximal end component} of $\M_p$ is an end component $(T,A)$ such that
for some $(H,K) \in Acc$, $H \cap T = \emptyset$ and $K \cap T \not= \emptyset$ and
there is no end component $(T', A') \not= (T,A)$
such that $T \subseteq T'$ and $A(s) \subseteq A'(s)$ for all $s \in T$.
It has an important property that starting from any state in $T$,
there exists a finite memory control policy to keep the state within $T$ forever while visiting all states in $T$ infinitely often
with probability 1.
AMECs of $\M_p$ can be efficiently identified based on iterative computations of strongly
connected components of $\M_p$.
We refer the reader to \cite{Baier:PMC2008} for more details.

Once the AMECs of $\M_p$ are identified,
we then compute the maximum probability of reaching $S_G$ where
$S_G$ contains all the states in the AMECs of $\M_p$.
For the rest of the paper, we use an LTL-like notations to describe events
in MDPs.
In particular, we use $\eventually S_G$ to denote the event of reaching some state in $S_G$ eventually.

For each $s \in S_p$, let $x_s$ denote the maximum probability of reaching a state in $S_G$,
starting from $s$.
Formall, $x_s = \sup_{\C_p} \prob^{\C_p}_{\M_p}(s \models \eventually S_G)$.
There are two main techniques for computing the probability $x_s$ for each $s \in S_p$:
linear programming (LP) and value iteration.
LP-based techniques yield an exact solution but
it typically does not scale as well as value iteration.
On the other hand, value iteration is an iterative numerical technique.
This method works by successively computing the probability vector $(x_s^{(k)})_{s \in S_p}$ for increasing $k \geq 0$
such that $\lim_{k \to \infty} x_s^{(k)} = x_s$ for all $s \in S_p$.
Initially, we set $x_s^{(0)} = 1$ if $s \in S_G$ and $x_s^{(0)} = 0$ otherwise.
In the $(k+1)$th iteration where $k \geq 0$, we set
\begin{equation}
\label{eq:value_iteration}
  x_s^{(k+1)} = \left\{ \begin{array}{ll}
  1 &\hbox{if } s \in S_G\\
  \displaystyle{\max_{\alpha \in Act_p(s)}} \sum_{t \in S_p} \mathbf{P}_p(s, \alpha, t)x_t^{(k)} 
  &\hbox{otherwise}.
  \end{array}\right.
\end{equation}

In practice, we terminate the computation and say that $x_s^{(k)}$ converges when a termination criterion such as
$\max_{s \in S_p}|x_s^{(k+1)} - x_s^{(k)}| < \epsilon$ is satisfied for some fixed (typically very small) threshold $\epsilon$.

Once the vector $(x_s)_{s \in S_p}$ is computed, a finite memory control policy $\C_p$ for $\M_p$
that maximizes the probability for $\M$ to satisfy $\varphi$ can be constructed as follows.
First, consider the case when $\M_p$ is in state $s \in S_G$. 
In this case, $s$ belongs to some AMEC $(T, A)$ and
the policy $\C_p$ selects an action $\alpha \in A(s)$ such that
all actions in $A(s)$ are scheduled infinitely often.
(For example, $\C_p$ may select the action for $s$ according to a round-robin policy.)
Next, consider the case when $\M_p$ is in state $s \in S_p \setminus S_G$.
In this case, $\C_p$ picks an action to ensure that $\prob^{\C_p}_\M(s \models \eventually S_G) = x_s$ can be achieved.
If $x_s = 0$, an action in $Act_p(s)$ can be chosen arbitrarily.
Otherwise, $\C_p$ picks an action $\alpha \in Act_p^{max}(s)$
such that $\mathbf{P}_p(s, \alpha, t) > 0$ for some $t \in S_p$ with $\|t\| = \|s\|-1$.
Here, $Act_p^{max}(s) \subseteq Act_p(s)$ is the set of actions such that
for all $\alpha \in Act_p^{max}(s)$,
$x_{s} = \sum_{t \in S_p} \mathbf{P}(s,\alpha,t)x_t$ and
$\|s\|$ denotes the length of a shortest path from $s$ to a state in $S_G$, 
using only actions in $Act_p^{max}$. 

\section{Worst-Case-Based Control Policy Synthesis}
\label{sec:syn-AMDP}

To solve Problem \ref{prob:worst-case}, we first propose a mathematical object called \emph{adversarial Markov decision process} (AMDP).
Then, we show that Problem \ref{prob:worst-case} can be formulated as finding an optimal control policy for an AMDP.
Finally, control policy synthesis for AMDP is discussed.

\subsection{Adversarial Markov Decision Process}
\label{ssec:AMDP}
\begin{definition}
An \emph{adversarial Markov decision process} (AMDP) is a tuple
$\M^{\A} = (S, Act_C, Act_A, \mathbf{P}, s_{init}, \Pi, L)$ where
$S$, $s_{init}$, $\Pi$ and $L$ are defined as in MDP and
\begin{itemize}
\item $Act_C$ is a finite set of control actions,
\item $Act_A$ is a finite set of adversarial actions, and
\item $\mathbf{P} : S \times Act_C \times Act_A \times S \to [0,1]$ is the transition probability function such that
for any $s \in S$, $\alpha \in Act_C$ and $\beta \in Act_A$,
$\sum_{t \in S} \mathbf{P}(s, \alpha, \beta, t) \in \{0, 1\}$.
\end{itemize}
\end{definition}
We say that a control action $\alpha$ is \emph{enabled} in state $s$ if and only if
there exists an adversarial action $\beta$ such that
$\sum_{t \in S} \mathbf{P}(s, \alpha, \beta, t) = 1$.
Similarly, an adversarial action $\beta$ is \emph{enabled} in state $s$ if and only if
there exists a control action $\alpha$ such that 
$\sum_{t \in S} \mathbf{P}(s, \alpha, \beta, t) = 1$.
Let $Act_C(s)$ and $Act_A(s)$ denote the set of enabled control and adversarial actions, respectively, in $s$.
We assume that for all $s \in S$, $\alpha \in Act_C(s)$ and $\beta \in Act_A(s)$,
$\sum_{t \in S} \mathbf{P}(s, \alpha, \beta, t) = 1$,
i.e., whether an adversarial (resp. control) action is enabled in state $s$
depends only on the state $s$ itself but
not on a control (resp. adversarial) action taken by the system (resp. adversary).

Given an AMDP $\M^{\A} = (S, Act_C, Act_A, \mathbf{P}, s_{init}, \Pi, L)$,
a control policy $\C : S^+ \to Act_C$ and an adversarial policy $\D : S^+ \to Act_A$
for an AMDP can be defined such that $\C(s_0s_1\ldots s_n) \in Act_C(s_n)$
and $\D(s_0s_1\ldots s_n) \in Act_A(s_n)$ for all $s_0s_1\ldots s_n \in S^+$.
A unique policy measure $\prob^{\C, \D}_{\M^{\A}}$ on
the $\sigma$-algebra associated with $\M^{\A}$ under control policy $\C$ and
adversarial policy $\D$ can then be defined based on the notion of path on $\M^{\A}$
as for an ordinary MDP.

We end the section with important properties of AMDP that will be employed in the control policy synthesis.

\begin{definition}
\label{def:complete}
Let $\mathbb{G}$ be a set of functions from $\T$ to $\V$ where
$\T$ and $\V$ are finite sets.
We say that $\mathbb{G}$ is \emph{complete} if
for any $t_1, t_2 \in \T$ and
$g_1, g_2 \in \mathbb{G}$,
there exists $g \in \mathbb{G}$ such that
$g(t_1) = g_1(t_1)$ and $g(t_2) = g_2(t_2)$ .
\end{definition}

\begin{lemma}
\label{lem:minmax-duality}
Let $\T$, $\U$ and $\V$ be finite sets
and let $\mathbb{G}$ be a set of functions from $\T$ to $\V$.
Then, $\mathbb{G}$ is finite.
Furthermore, suppose that $\mathbb{G}$ is complete.
Then, for any $F : \U \times \T \to \nnreals$
and $G : \V \times \T \to \nnreals$,
\begin{equation}
\label{eq:minmax-duality}
\begin{array}{l}
\min_{u \in \U} \sum_{t \in \T} F(u, t) \max_{g \in \mathbb{G}} G(g(t), t)\\
\hspace{5mm}= \min_{u \in \U} \max_{g \in \mathbb{G}} \sum_{t \in \T} F(u, t) G(g(t), t)\\
\hspace{5mm}= \max_{g \in \mathbb{G}}  \min_{u \in \U}\sum_{t \in \T} F(u, t) G(g(t), t).
\end{array}
\end{equation}
In addition,
\begin{equation}
\label{eq:minmin-duality}
\begin{array}{l}
\min_{u \in \U} \sum_{t \in \T} F(u, t) \min_{g \in \mathbb{G}} G(g(t), t)\\
\hspace{5mm}= \min_{u \in \U} \min_{g \in \mathbb{G}} \sum_{t \in \T} F(u, t) G(g(t), t).
\end{array}
\end{equation}
\end{lemma}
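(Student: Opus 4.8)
The plan is to reduce both displayed identities to a single structural consequence of completeness — a \emph{simultaneous selection} property — and then to exploit the fact that the selected function is independent of the minimizing variable $u$, which collapses the remaining inequalities. Finiteness of $\mathbb{G}$ is immediate and should be dispatched first: $\mathbb{G}$ is a subset of the set of all functions $\T \to \V$, which has $|\V|^{|\T|}$ elements; consequently every $\min$ and $\max$ appearing in the statement ranges over a finite set and is attained, which is what we will use throughout.

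The heart of the argument is the following claim. For $t \in \T$ let $\V(t) = \{\, g(t) : g \in \mathbb{G}\,\}$ be the set of values realizable at $t$. If $\mathbb{G}$ is complete, then for every family $(v_t)_{t \in \T}$ with $v_t \in \V(t)$ there is a single $g^{*} \in \mathbb{G}$ with $g^{*}(t) = v_t$ for all $t \in \T$. I would prove this by induction on $|\T|$: the case $|\T| \le 2$ is precisely Definition~\ref{def:complete}, and the inductive step must glue a function already matching $(v_t)$ off a distinguished point with one realizing $v_t$ at that point. This simultaneous-selection claim is the step I expect to be the main obstacle: the two-point completeness hypothesis has to be applied repeatedly and with care so that pinning down the value at a new point does not disturb the values already pinned down, and it is here that the hypothesis genuinely enters (everything else is term-by-term nonnegativity of $F$ and elementary $\min/\max$ manipulations).

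Granting the claim, consider \eqref{eq:minmax-duality}. Put $M(t) = \max_{g \in \mathbb{G}} G(g(t), t)$ and, applying the claim with each $v_t$ a maximizer of $G(\cdot,t)$ over $\V(t)$, fix $g^{*} \in \mathbb{G}$ with $G(g^{*}(t), t) = M(t)$ for every $t$. For each fixed $u$, nonnegativity of $F(u,\cdot)$ gives $\sum_t F(u,t) G(g(t),t) \le \sum_t F(u,t) M(t)$ for every $g$, hence $\max_g \sum_t F(u,t) G(g(t),t) \le \sum_t F(u,t) M(t)$; conversely this bound is met by $g = g^{*}$. Thus $\sum_t F(u,t) M(t) = \max_g \sum_t F(u,t) G(g(t),t)$ for every $u$, and taking $\min_u$ yields the equality of the first two lines. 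For the third line, weak duality gives $\max_g \min_u \le \min_u \max_g$ for free; for the reverse, since $g^{*}$ does not depend on $u$ we have $\max_g \min_u \sum_t F(u,t) G(g(t),t) \ge \min_u \sum_t F(u,t) G(g^{*}(t),t) = \min_u \sum_t F(u,t) M(t)$, which is the first line. Hence all three lines coincide.

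The identity \eqref{eq:minmin-duality} follows in the same way with minima in place of the inner maxima: set $m(t) = \min_{g \in \mathbb{G}} G(g(t),t)$ and use the claim to pick $g^{\circ} \in \mathbb{G}$ with $G(g^{\circ}(t),t) = m(t)$ for all $t$. For each $u$, term-by-term comparison gives $\sum_t F(u,t) m(t) \le \sum_t F(u,t) G(g(t),t)$ for every $g$, hence $\sum_t F(u,t) m(t) \le \min_g \sum_t F(u,t) G(g(t),t)$, while $g = g^{\circ}$ attains equality; taking $\min_u$ and using $\min_u \min_g = \min_g \min_u$ gives \eqref{eq:minmin-duality}. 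The only nonroutine ingredient in all of this is the simultaneous-selection claim of the second paragraph.
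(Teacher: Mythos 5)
Your proof has the same architecture as the paper's: finiteness is immediate; the crux is the existence of a single $g^{*}\in\mathbb{G}$ realizing the pointwise maximum $\max_{g\in\mathbb{G}}G(g(t),t)$ simultaneously at every $t\in\T$; once $g^{*}$ is in hand, termwise nonnegativity of $F$ gives the first equality in \eqref{eq:minmax-duality}, and weak duality together with the fact that $g^{*}$ does not depend on $u$ gives the second, with \eqref{eq:minmin-duality} handled the same way using a pointwise minimizer. (Your treatment of the second equality and of \eqref{eq:minmin-duality} is in fact tidier than the paper's.) You are also right that the simultaneous-selection claim is the only place completeness enters, and right to flag it as the main obstacle.

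The gap is that your proposed induction on $|\T|$ cannot succeed: the two-point property of Definition~\ref{def:complete} does \emph{not} imply $n$-point simultaneous selection. (Read the definition with $t_1\neq t_2$, as one must --- otherwise no $\mathbb{G}$ containing two functions that differ somewhere, including the paper's own $\mathbb{C}_0$, would be complete.) Take $\T=\{1,2,3\}$, $\V=\{0,1\}$, and let $\mathbb{G}$ be the four functions $g$ with $g(1)+g(2)+g(3)$ even. For any two distinct points and any prescribed pair of achievable values, the third value can be chosen to restore even parity, so $\mathbb{G}$ is complete; yet no $g\in\mathbb{G}$ equals $1$ at all three points, so the family $v_t\equiv 1$ has no simultaneous selector --- exactly the configuration your inductive step would need to produce. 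Worse, with $G(v,t)=v$, $\U$ a singleton and $F\equiv 1$, the first line of \eqref{eq:minmax-duality} equals $3$ while the second equals $2$, so the lemma itself fails under the literal two-point definition; the paper's proof shares this hole, since it simply asserts the existence of $g^{*}$ from completeness. The statement is saved in its intended use only because the $\mathbb{G}$ to which it is applied, namely $\mathbb{C}_0=\{\C \mid \C(t)\in Act_C(t) \text{ for all } t\}$, is the full product $\prod_{t\in\T}\V(t)$ of the pointwise value sets, for which simultaneous selection is trivial. So the correct repair is not a more careful induction but a stronger hypothesis: define completeness as ``for every family $(g_t)_{t\in\T}$ in $\mathbb{G}$ there exists $g\in\mathbb{G}$ with $g(t)=g_t(t)$ for all $t$,'' after which your third and fourth paragraphs go through verbatim.
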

\begin{proof}
Since both $\T$ and $\V$ are finite, clearly, $\mathbb{G}$ is finite.
Thus, the $\min$ and $\max$ in (\ref{eq:minmax-duality})--(\ref{eq:minmin-duality})
are well defined.
Let $\tilde{g} \in \mathbb{G}$ be a function such that
$\sum_{t \in \T} F(u, t) G(\tilde{g}(t), t) =  \max_{g \in \mathbb{G}} \sum_{t \in \T} F(u, t) G(g(t), t)$.
Since $\mathbb{G}$ is complete, there exists a function $g^* \in \mathbb{G}$ such that for all $t \in \T$, $g^*(t)$ satisfies
$G(g^*(t), t) = \max_{g \in \mathbb{G}} G(g(t), t)$.
Furthermore, since both $F$ and $G$ are non-negative and $\tilde{g} \in \mathbb{G}$, it follows that
$F(u, t) G(g^*(t), t) = \max_{g \in \mathbb{G}} F(u, t) G(g(t), t) \geq F(u, t) G(\tilde{g}(t), t)$
for all $u \in \U$ and $t \in \T$.
Taking the sum over all $t \in \T$, we get
\begin{equation}
\label{eq:pf-minmax-duality1}
\begin{array}{rcl}
\sum_{t \in \T}  F(u, t) G(g^*(t), t) 
&\geq&
\sum_{t \in \T} F(u, t) G(\tilde{g}(t), t) \\
&=&
\displaystyle{\max_{g \in \mathbb{G}} \sum_{t \in \T} F(u, t) G(g(t), t)}
\end{array}
\end{equation}
But since $g^* \in \mathbb{G}$, it follows that
\begin{equation}
\label{eq:pf-minmax-duality2}
\max_{g \in \mathbb{G}} \sum_{t \in \T} F(u, t) G(g(t), t)
\geq \sum_{t \in \T} F(u, t) G(g^*(t), t).
\end{equation}

Combining (\ref{eq:pf-minmax-duality1}) and (\ref{eq:pf-minmax-duality2}), 
we get that all the inequalities must be replaced by equalities.
Hence, we can conclude that the first equality in (\ref{eq:minmax-duality}) holds.
The proof for the equality in (\ref{eq:minmin-duality}) follows similar arguments.
Hence, we only provide a proof for the second inequality in (\ref{eq:minmax-duality}).

First, from weak duality, we know that
\begin{equation}
\label{eq:pf-minmax-duality3}
\begin{array}{l}
\min_{u \in \U} \max_{g \in \mathbb{G}} \sum_{t \in \T} F(u, t) G(g(t), t)\\
\hspace{5mm}\geq \max_{g \in \mathbb{G}}  \min_{u \in \U}\sum_{t \in \T} F(u, t) G(g(t), t).
\end{array}
\end{equation}
For each $g \in \mathbb{G}$, consider an element $u_g^* \in \U$ such that
$\sum_{t \in \T} F(u_g^*, t) G(g(t), t) = \min_{u \in \U}\sum_{t \in \T} F(u, t) G(g(t), t)$.
Since $g^* \in \mathbb{G}$, it follows that
\begin{equation}
\label{eq:pf-minmax-duality4}
\max_{g \in \mathbb{G}}  \min_{u \in \U}\sum_{t \in \T} F(u, t) G(g(t), t) \geq
\sum_{t \in \T} F(u_{g^*}^*, t) G(g^*(t), t).
\end{equation}
But, from the definition of $g^*$ and $u^*$ and the first equality in (\ref{eq:minmax-duality}), we also get
\begin{equation}
\label{eq:pf-minmax-duality5}
\begin{array}{l}
\sum_{t \in \T} F(u_{g^*}^*, t) G(g^*(t), t) \\
\hspace{5mm}=
\min_{u \in \U} \sum_{t \in \T} F(u, t) \max_{g \in \mathbb{G}} G(g(t), t)\\
\hspace{5mm}=
\min_{u \in \U} \max_{g \in \mathbb{G}} \sum_{t \in \T} F(u, t) G(g(t), t).
\end{array}
\end{equation}
Following the chain of inequalities in (\ref{eq:pf-minmax-duality3})--(\ref{eq:pf-minmax-duality5}),
we can conclude that the second equality in (\ref{eq:minmax-duality}) holds.
\end{proof}

\begin{proposition}
\label{prop:xs-AMDP-alt}
Let $\M^{\A} = (S, Act_C, Act_A, \mathbf{P}, s_{init}, \Pi$, $L)$ be a finite AMDP and
$S_G \subseteq S$ be the set of goal states.
Let
\begin{equation}
\label{eq:xs-AMDP-alt}
x_s = \sup_{\C_0 \in \mathbb{C}_0} \inf_{\D_0 \in \mathbb{D}_0} \sup_{\C_1 \in \mathbb{C}_1} \inf_{\D_1 \in \mathbb{D}_1} \ldots 
\prob^{\C,\D}_{\M^{\A}}(s \models \eventually S_G),
\end{equation}
where for any $n \geq 0$,
$\mathbb{C}_n = \{\C : S^{n+1} \to Act_C \hspace{1mm}|\hspace{1mm} \C(s_0s_1\ldots s_n) \in Act_C(s_n)\}$,
$\mathbb{D}_n = \{\D : S^{n+1} \to Act_A \hspace{1mm}|\hspace{1mm} \D(s_0s_1\ldots s_n) \in Act_A(s_n)\}$,
$\C(s_0s_1\ldots s_n) = \C_n(s_0s_1\ldots s_n)$ and
$\D(s_0s_1\ldots s_n) = \D_n(s_0s_1\ldots s_n)$.
For each $k \geq 0 $, consider a vector $(x_s^{(k)})_{s \in S}$ where
$x_s^{(0)} = 1$ for all $s \in S_G$,
$x_s^{(0)} = 0$ for all $s \not\in S_G$
and for all $k \geq 0$,
\begin{equation}
\label{eq:xsk-AMDP-alt}
x_s^{(k+1)} = \left\{ \begin{array}{ll}
1 &\hspace{-10mm}\hbox{if } s \in S_G\\
\vspace{-3mm}\displaystyle{\max_{ \alpha \in Act_C(s)} \min_{\beta \in Act_A(s)}} \sum_{t \in S} \mathbf{P}(s, \alpha, \beta, t) x_{t}^{(k)}\\
&\hspace{-10mm}\hbox{otherwise}
\end{array}\right.
\end{equation}
Then, for any $s \in S$, $x_s^{(0)} \leq x_s^{(1)} \leq \ldots \leq x_s$
and $x_s = \lim_{k \to \infty} x_s^{(k)}$.
\end{proposition}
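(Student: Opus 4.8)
The plan is to prove, in order, the monotonicity of the iterates, the bound $x_s^{(k)}\le x_s$, and the convergence $x_s=\lim_k x_s^{(k)}$, after one harmless normalization. Without loss of generality assume every goal state is absorbing (replace $\mathbf{P}(s,\alpha,\beta,\cdot)$ by the point mass at $s$ for each $s\in S_G$ and each enabled $\alpha,\beta$): this changes neither $\prob^{\C,\D}_{\M^{\A}}(s\models\eventually S_G)$ for any $s,\C,\D$ (the event $\eventually S_G$ and its probability depend only on the path up to the first visit to $S_G$) nor the iterates $x_s^{(k)}$ (the recursion fixes $x_s^{(k)}=1$ on $S_G$ and, off $S_G$, uses only transitions out of non-goal states). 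After this reduction the events $\{s_k\in S_G\}$ increase with $k$ to $\{\,\cdot\models\eventually S_G\,\}$, so $\prob^{\C,\D}_{\M^{\A}}(s\models\eventually S_G)=\lim_k\prob^{\C,\D}_{\M^{\A}}(s_k\in S_G)$ by continuity from below.

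Monotonicity $x_s^{(0)}\le x_s^{(1)}\le\cdots$ is a direct induction on $k$: off $S_G$ the right-hand side of (\ref{eq:xsk-AMDP-alt}) is nonnegative while $x_s^{(0)}=0$, on $S_G$ both are $1$, and the inductive step follows since the maps $x\mapsto\sum_t\mathbf{P}(s,\alpha,\beta,t)\,x_t$, $\min_\beta$ and $\max_\alpha$ are monotone in their nonnegative arguments. Being monotone and bounded by $1$, the iterates converge; put $\hat{x}_s=\lim_k x_s^{(k)}$. Passing to the limit in (\ref{eq:xsk-AMDP-alt}), and using that $S$, $Act_C$, $Act_A$ are finite so the finite sums, $\min$ and $\max$ are continuous, shows $\hat{x}$ is a fixed point of the Bellman operator: $\hat{x}_s=1$ for $s\in S_G$ and $\hat{x}_s=\max_{\alpha\in Act_C(s)}\min_{\beta\in Act_A(s)}\sum_{t\in S}\mathbf{P}(s,\alpha,\beta,t)\,\hat{x}_t$ otherwise.

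For $x_s^{(k)}\le x_s$ I would exhibit an explicit control policy. Let $\C^{*}$ be the memoryless, time-varying policy that in state $s$ at time $i<k$ plays an action attaining the maximum in $\max_\alpha\min_\beta\sum_t\mathbf{P}(s,\alpha,\beta,t)\,x_t^{(k-1-i)}$ (and any enabled action at times $i\ge k$); this is a legitimate choice of the maps $\C_0,\C_1,\dots$ in (\ref{eq:xs-AMDP-alt}). Under $\C^{*}$ and any adversarial policy the process $Y_i=x_{s_i}^{(k-i)}$, $0\le i\le k$, is a submartingale: on $S_G$ it is constantly $1$ by absorption, and off $S_G$, for any adversary action $\beta$, $\expect[Y_{i+1}\mid s_0\ldots s_i]=\sum_t\mathbf{P}(s_i,\C^{*}(s_0\ldots s_i),\beta,t)\,x_t^{(k-1-i)}\ge\min_{\beta'}\sum_t\mathbf{P}(s_i,\C^{*}(s_0\ldots s_i),\beta',t)\,x_t^{(k-1-i)}=x_{s_i}^{(k-i)}=Y_i$ by the defining property of $\C^{*}$. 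Hence $\prob^{\C^{*},\D}_{\M^{\A}}(s_k\in S_G)=\expect[Y_k]\ge Y_0=x_s^{(k)}$ for every $\D$, so $\prob^{\C^{*},\D}_{\M^{\A}}(s\models\eventually S_G)\ge x_s^{(k)}$. Replacing every inner $\sup_{\C_i}$ in (\ref{eq:xs-AMDP-alt}) by the fixed policy $\C^{*}$ can only lower the value, so $x_s\ge\inf_{\D}\prob^{\C^{*},\D}_{\M^{\A}}(s\models\eventually S_G)\ge x_s^{(k)}$; letting $k\to\infty$ also gives $\hat{x}_s\le x_s$.

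It remains to prove $x_s\le\hat{x}_s$, which I expect to be the main obstacle because it must confront the alternating $\sup$--$\inf$ structure of (\ref{eq:xs-AMDP-alt}). Here I would let the adversary respond greedily to $\hat{x}$: for each $s\notin S_G$ and each control action $\alpha$ fix $\beta^{*}(s,\alpha)\in Act_A(s)$ attaining the minimum in $\min_\beta\sum_t\mathbf{P}(s,\alpha,\beta,t)\,\hat{x}_t$, and for a control policy with component maps $\C_0,\C_1,\dots$ set $\D^{*}_i(s_0\ldots s_i)=\beta^{*}\big(s_i,\C_i(s_0\ldots s_i)\big)$ --- admissible precisely because in (\ref{eq:xs-AMDP-alt}) each $\inf_{\D_i}$ is taken after $\C_i$ has been fixed. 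Since $\hat{x}$ is a fixed point, $W_i=\hat{x}_{s_i}$ is a supermartingale under $(\C,\D^{*})$, so $\prob^{\C,\D^{*}}_{\M^{\A}}(s_i\in S_G)\le\expect[W_i]\le W_0=\hat{x}_s$ (the first inequality because $\hat{x}_s=1$ on $S_G$ and $\hat{x}\ge 0$), and letting $i\to\infty$ gives $\prob^{\C,\D^{*}}_{\M^{\A}}(s\models\eventually S_G)\le\hat{x}_s$ for every $\C$. Substituting this $\D^{*}$ for each $\inf_{\D_i}$ in (\ref{eq:xs-AMDP-alt}) can only raise the value, collapsing the nested expression to a supremum over control policies and yielding $x_s\le\hat{x}_s$. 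Combining the two inequalities gives $x_s=\hat{x}_s=\lim_k x_s^{(k)}$, which with the monotonicity and $x_s^{(k)}\le x_s$ above is exactly the claim. The care-points I anticipate: justifying that the infinitely nested $\sup$/$\inf$ behaves as expected when a single policy is substituted for an inner optimizer --- this is where the completeness of the policy families (Definition \ref{def:complete}) and the min--max interchange of Lemma \ref{lem:minmax-duality} are the convenient tools, letting per-branch optimal continuations be recombined into one admissible policy --- and checking that the (super/sub)martingale identities survive conditioning on arbitrary finite histories.
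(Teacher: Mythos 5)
Your proof is correct, but it takes a genuinely different route from the paper's. The paper proves, by induction on $k$, the exact identity between the iterate $x_s^{(k)}$ and the value of the $k$-step alternating game for the truncated event $\eventually^{\leq k} S_G$; the inductive step works by repeatedly invoking the completeness of the policy families and the min--max interchange of Lemma~\ref{lem:minmax-duality} (applied $2k$ times) to push the one-step sum $\sum_t \mathbf{P}(s,\alpha,\beta,t)(\cdot)$ through all the nested $\sup$/$\inf$ operators, after which monotonicity follows from the inclusion of events $\eventually^{\leq k} S_G \subseteq \eventually^{\leq k+1} S_G$ and the limit from continuity from below. You instead never identify the iterates with finite-horizon game values: you show the iterates converge to a fixed point $\hat{x}$ of the Bellman operator and then sandwich $x_s$ by exhibiting explicit strategies --- a time-varying greedy controller making $x_{s_i}^{(k-i)}$ a submartingale against every adversary (giving $x_s^{(k)} \leq x_s$), and a reactive greedy adversary making $\hat{x}_{s_i}$ a supermartingale against every controller (giving $x_s \leq \hat{x}_s$). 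This is the classical verification-theorem argument from stochastic games; it buys explicit $\varepsilon$-optimal strategies and largely sidesteps the $2k$-fold interchange, whereas the paper's argument buys the sharper statement that each $x_s^{(k)}$ \emph{is} the $k$-horizon game value, which it also reuses verbatim for Proposition~\ref{prop:xs-AMDP-nonalt}. Your preliminary absorption of $S_G$ is indeed harmless for both the iterates and the reachability probabilities.

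One point deserves more care than your closing remark gives it: the two substitution steps (``replacing every inner $\sup_{\C_i}$ by $\C^{*}$ can only lower the value'' and dually for $\D^{*}$) presuppose that the infinitely alternating expression (\ref{eq:xs-AMDP-alt}) is well defined and monotone under freezing one player's choices. Under any standard game-theoretic reading this is true, and the defect is inherited from the statement itself (the paper is no more rigorous here), so I do not count it as a gap --- but if you were to write this up you should fix an interpretation of (\ref{eq:xs-AMDP-alt}) (e.g.\ as $\sup$ over control strategies of $\inf$ over reactive adversary strategies, or as the limit of the truncated games) before performing those substitutions. With that pinned down, your argument is complete.
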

\begin{proof}
Since $\sum_{t \in S} \mathbf{P}(s, \alpha, \beta, t) \in \{0,1\}$ for all $\alpha \in Act_C$ and $\beta \in Act_A$,
it can be checked that for any $k \geq 0$, if $x_s^{(k)} \in [0,1]$ for all $s \in S$,
then $x_s^{(k+1)} \in [0,1]$ for all $s \in S$.
Since, $x_s^{(0)} \in [0,1]$ for all $s \in S$,
we can conclude that $x_s^{(k)} \in [0,1]$ for all $k \geq 0$ and $s \in S$.

Let $\eventually^{\leq k} S_G$ denote the event of reaching some state in $S_G$
within $k$ steps.
We will show, using induction on $k$, that for any $k \geq 0$ and $s \in S$,
\begin{equation}
\label{eq:pf-alt-policies}
\hspace{-2mm}
\begin{array}{rcl}
x_s^{(k)} &\hspace{-1mm}=&\hspace{-1mm} 
\sup_{\C_0 \in \mathbb{C}_0} \inf_{\D_0 \in \mathbb{D}_0} \sup_{\C_1 \in \mathbb{C}_1} \inf_{\D_1 \in \mathbb{D}_1} \ldots \\
&&\hspace{-1mm}
\sup_{\C_{k-1} \in \mathbb{C}_{k-1}} \inf_{\D_{k-1} \in \mathbb{D}_{k-1}}
\prob^{\C, \D}_{\M^{\A}} (s \models \eventually^{\leq k} S_G),
\end{array}
\end{equation}
where $\C$ and $\D$ are control and adversarial policies such that
for any $n$ such that $0 \leq n < k$,
$\C(s_0s_1\ldots s_n) = \C_n(s_0s_1\ldots s_n)$ and
$\D(s_0s_1\ldots s_n) = \D_n(s_0s_1\ldots s_n)$.
The case where $s \in S_G$ is trivial so we only consider an arbitrary $s \in S \setminus S_G$.
Clearly, $x_s^{(0)} = \prob^{\C, \D}_{\M^{\A}} (s \models \eventually^{\leq 0} S_G)$
for any control and adversarial policies $\C$ and $\D$.
Consider an arbitrary $k \geq 0$ and assume that for all $s \in S$,
(\ref{eq:pf-alt-policies}) holds.
Then, from (\ref{eq:xsk-AMDP-alt}), we get that for any $s \in S \setminus S_G$,
\begin{equation*}
\hspace{-2mm}
\begin{array}{rcl}
x_s^{(k+1)} &\hspace{-2mm}=&\hspace{-2mm} \max_{ \alpha \in Act_C(s)} \min_{\beta \in Act_A(s)} \sum_{t \in S} \mathbf{P}(s, \alpha, \beta, t)\\
&&\hspace{-2mm} 
\sup_{\C_0 \in \mathbb{C}_0} \inf_{\D_0 \in \mathbb{D}_0}  \sup_{\C_1 \in \mathbb{C}_1} \inf_{\D_1 \in \mathbb{D}_1}\ldots \\
&&\hspace{-2mm}
\sup_{\C_{k-1} \in \mathbb{C}_{k-1}} \inf_{\D_{k-1} \in \mathbb{D}_{k-1}} \prob^{\C, \D}_{\M^{\A}} (t \models \eventually^{\leq k} S_G).
\end{array}
\end{equation*}
Since for all $n$ such that $0 \leq n < k$, $S^{n+1}$, $Act_C$ and $Act_A$ are finite,
it follows that $\mathbb{C}_n$ and $\mathbb{D}_n$ are finite.
Furthermore, $\prob^{\C, \D}_{\M^{\A}} (t \models \eventually^{\leq k} S_G)$ only depends
on $\C_0$, $\D_0, \ldots, \C_{k-1}$, $\D_{k-1}$.
Thus, we can conclude that all the $\sup$ and $\inf$ above can be attained,
so we can replace them by $\max$ and $\min$, respectively.

Consider arbitrary $s \in S \setminus S_G$ and $\alpha \in Act_C(s)$.
Define a function $F : Act_A(s) \times S \to [0,1]$ such that
$F(\beta, t) = \mathbf{P}(s, \alpha, \beta, t)$.
In addition, define a set $\mathbb{G} = \mathbb{C}_0$ and a function
$G : Act_C \times S \to [0,1]$ such that
$G(\C_0(t), t) = \inf_{\D_0 \in \mathbb{D}_0}  \sup_{\C_1 \in \mathbb{C}_1} \inf_{\D_1 \in \mathbb{D}_1}\ldots 
\sup_{\C_{k-1} \in \mathbb{C}_{k-1}} \inf_{\D_{k-1} \in \mathbb{D}_{k-1}}$ 
$\prob^{\C, \D}_{\M^{\A}} (t \models \eventually^{\leq k} S_G)$
where for any $n \in \{0, \ldots, k-1\}$,
$\C(s_0s_1\ldots s_n) = \C_n(s_0s_1\ldots s_n)$ and
$\D(s_0s_1\ldots s_n) = \D_n(s_0s_1\ldots s_n)$.
Pick arbitrary $t_1, t_2 \in S$ and $g_1, g_2 \in \mathbb{G}$.
Suppose $g_1(t_1) = \alpha_1$ and $g_2(t_2) = \alpha_2$.
Then, from the definition of $\mathbb{C}_0$,
$\alpha_1 \in Act_C(t_1)$ and $\alpha_2 \in Act_C(t_2)$;
hence, there must exists $g \in \mathbb{G}$ such that $g(t_1) = \alpha_1$
and $g(t_2) = \alpha_2$.
Thus, by definition, $\mathbb{G}$ is complete.
Applying Lemma \ref{lem:minmax-duality}, we get
\begin{equation*}
\hspace{-2mm}
\begin{array}{rcl}
x_s^{(k+1)} &\hspace{-2mm}=&\hspace{-2mm} \max_{ \alpha \in Act_C(s)} \min_{\beta \in Act_A(s)} \max_{\C_0 \in \mathbb{C}_0}\\
&&\hspace{-2mm} 
\sum_{t \in S} \mathbf{P}(s, \alpha, \beta, t) \min_{\D_0 \in \mathbb{D}_0} \ldots \max_{\C_{k-1} \in \mathbb{C}_{k-1}} \\
&&\hspace{-2mm}
 \min_{\D_{k-1} \in \mathbb{D}_{k-1}} \prob^{\C, \D}_{\M^{\A}} (t \models \eventually^{\leq k} S_G).
\end{array}
\end{equation*}

Applying a similar procedure as in the previous paragraph $2k$ times, 
we get 
\begin{equation*}
\hspace{-2mm}
\begin{array}{rcl}
x_s^{(k+1)} &\hspace{-2mm}=&\hspace{-2mm} \displaystyle{\max_{ \alpha \in Act_C(s)} \min_{\beta \in Act_A(s)}} 
\max_{\C_0 \in \mathbb{C}_0}  \min_{\D_0 \in \mathbb{D}_0}\ldots \max_{\C_{k-1} \in \mathbb{C}_{k-1}}\\
&&\hspace{-2mm}
\displaystyle{\min_{\D_{k-1} \in \mathbb{D}_{k-1}} \sum_{t \in S}} \mathbf{P}(s, \alpha, \beta, t)
\prob^{\C, \D}_{\M^{\A}} (t \models \eventually^{\leq k} S_G)\\
&\hspace{-2mm}=&\hspace{-2mm}
\displaystyle{\sup_{\C_0 \in \mathbb{C}_0} \inf_{\D_0 \in \mathbb{D}_0} \sup_{\C_1 \in \mathbb{C}_1} \inf_{\D_1 \in \mathbb{D}_1} \ldots 
\sup_{\C_{k} \in \mathbb{C}_{k}} \inf_{\D_{k} \in \mathbb{D}_{k}}} \\
&&\hspace{-2mm}
\prob^{\C, \D}_{\M^{\A}} (s \models \eventually^{\leq k+1} S_G).
\end{array}
\end{equation*}

Thus, we can conclude that for any $k \geq 0$ and $s \in S \setminus S_G$, (\ref{eq:pf-alt-policies}) holds.
Furthermore, since the set of events $\eventually^{\leq k+1} S_G$ includes the set of events $\eventually^{\leq k} S_G$,
we obtain $x_s^{(k)} \leq x_s^{(k+1)}$.
Finally, we can conclude that $\lim_{k \to \infty} x_s^{(k)} = x_s$ using the fact that 
the sequence $x_s^{(0)}, x_s^{(1)}, \ldots$ is monotonic and bounded (and hence has a finite limit)
and $\eventually S_G$ is the countable union of the events $\eventually^{\leq k} S_G$.
\end{proof}

\begin{proposition}
\label{prop:xs-AMDP-nonalt}
Let $\M^{\A} = (S, Act_C, Act_A, \mathbf{P}, s_{init}, \Pi$, $L)$ be a finite AMDP and
$S_G \subseteq S$ be the set of goal states.
Let
\begin{equation}
\label{eq:xs-AMDP-nonalt}
y_s = \sup_{\C} \inf_{\D} \prob^{\C, \D}_{\M^{\A}} (s \models \eventually S_G).
\end{equation}
For each $k \geq 0 $, consider a vector $(y_s^{(k)})_{s \in S}$ where
$y_s^{(0)} = 1$ for all $s \in S_G$,
$y_s^{(0)} = 0$ for all $s \not\in S_G$
and for all $k \geq 0$,
\begin{equation}
\label{eq:xsk-AMDP-nonalt}
y_s^{(k+1)} = \left\{ \begin{array}{ll}
1 &\hspace{-10mm}\hbox{if } s \in S_G\\
\vspace{-3mm}\displaystyle{\max_{ \alpha \in Act_C(s)} \min_{\beta \in Act_A(s)}} \sum_{t \in S} \mathbf{P}(s, \alpha, \beta, t) y_{t}^{(k)}\\
&\hspace{-10mm}\hbox{otherwise}
\end{array}\right.
\end{equation}
Then, for any $s \in S$, $y_s^{(0)} \leq y_s^{(1)} \leq \ldots \leq y_s$
and $y_s = \lim_{k \to \infty} y_s^{(k)}$.
\end{proposition}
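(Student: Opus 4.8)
The plan is to reduce the claim to Proposition~\ref{prop:xs-AMDP-alt}. The first observation is that the recurrence~(\ref{eq:xsk-AMDP-nonalt}) defining $(y_s^{(k)})$ is literally the same as~(\ref{eq:xsk-AMDP-alt}) defining $(x_s^{(k)})$, with the same initialization, so $y_s^{(k)} = x_s^{(k)}$ for all $k \ge 0$ and $s \in S$. By Proposition~\ref{prop:xs-AMDP-alt} the sequence $(y_s^{(k)})_k$ is then non-decreasing and $\lim_{k\to\infty} y_s^{(k)} = x_s$, the alternating value~(\ref{eq:xs-AMDP-alt}). Everything therefore reduces to proving $y_s = x_s$; the monotone chain $y_s^{(0)} \le y_s^{(1)} \le \ldots \le y_s$ is then immediate.

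For $y_s \le x_s$ I would invoke weak duality. Since the constraint defining an (adversarial) policy is local to each history, a control policy $\C$ is the same thing as a sequence $(\C_n)_{n\ge 0}$ of restrictions $\C_n = \C|_{S^{n+1}} \in \mathbb{C}_n$, and similarly for $\D$; hence $y_s = \sup_{\C_0}\sup_{\C_1}\cdots\,\inf_{\D_0}\inf_{\D_1}\cdots\, \prob^{\C,\D}_{\M^{\A}}(s \models \eventually S_G)$. Moving each $\inf_{\D_n}$ leftward past the later suprema $\sup_{\C_{n+1}},\sup_{\C_{n+2}},\dots$ --- an operation that replaces a block $\sup_a\inf_b$ by $\inf_b\sup_a$ and so can only increase the value --- rewrites this expression as the interleaved one~(\ref{eq:xs-AMDP-alt}); as in Proposition~\ref{prop:xs-AMDP-alt} I would make this precise through the finite-horizon events $\eventually^{\le k} S_G$, where only $2k$ exchanges are involved, and then pass to the limit, using the standard fact that for a fixed $\C$ the infinite-horizon adversarial reachability value is the supremum over $k$ of the $k$-step ones.

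The substantive direction is $y_s \ge x_s$: committing the control policy in advance should cost the controller nothing. For each $k$ I would exhibit an explicit witness policy $\C^{(k)}$: from a history of length $n < k$ ending in a state $s$, let $\C^{(k)}$ play some $\alpha_s^{(k-n)} \in \argmax_{\alpha \in Act_C(s)}\min_{\beta \in Act_A(s)}\sum_{t\in S}\mathbf{P}(s,\alpha,\beta,t)\,x_t^{(k-n-1)}$, and let it act arbitrarily from step $k$ on. With $\C^{(k)}$ frozen the process is an ordinary adversary-controlled one, and a backward induction on the number $j$ of remaining steps shows that the minimum, over all adversarial policies $\D$, of the probability of reaching $S_G$ within $j$ steps from a state $s$ with $j$ steps to go is exactly $x_s^{(j)}$; the inductive step uses only the defining property of $\alpha_s^{(j)}$ and equation~(\ref{eq:xsk-AMDP-alt}), and this bookkeeping is the step I expect to demand the most care, although it is otherwise routine. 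Consequently $\prob^{\C^{(k)},\D}_{\M^{\A}}(s \models \eventually^{\le k} S_G) \ge x_s^{(k)}$ for every $\D$, hence, since $\eventually^{\le k} S_G$ entails $\eventually S_G$, $\inf_\D \prob^{\C^{(k)},\D}_{\M^{\A}}(s \models \eventually S_G) \ge x_s^{(k)}$, and therefore $y_s \ge x_s^{(k)}$; letting $k\to\infty$ gives $y_s \ge \lim_k x_s^{(k)} = x_s$. Together with the first inequality this yields $y_s = x_s = \lim_k y_s^{(k)}$, which also establishes the assertion announced in the introduction that the maximum worst-case probability is insensitive to whether the two players move alternately or commit their policies at the outset. (Alternatively, $y_s = \lim_k y_s^{(k)}$ can be obtained directly by mimicking the proof of Proposition~\ref{prop:xs-AMDP-alt}, using Lemma~\ref{lem:minmax-duality} to identify the $k$-step non-alternating value with the recurrence~(\ref{eq:xsk-AMDP-nonalt}).)
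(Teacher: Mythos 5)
Your argument is correct, but it takes a genuinely different route from the paper's. The paper proves Proposition \ref{prop:xs-AMDP-nonalt} by repeating the induction of Proposition \ref{prop:xs-AMDP-alt} in the non-alternating setting: using Lemma \ref{lem:minmax-duality} it shows directly that $y_s^{(k)}=\sup_{\C}\inf_{\D}\prob^{\C,\D}_{\M^{\A}}(s\models\eventually^{\leq k}S_G)$ for every $k$, and then passes to the limit exactly as in Proposition \ref{prop:xs-AMDP-alt}. You instead observe that the two recurrences coincide, so everything reduces to $y_s=x_s$, and you prove that equality game-theoretically: $y_s\leq x_s$ by weak duality (pushing each $\inf_{\D_n}$ leftward past the later suprema at finite horizons and invoking (\ref{eq:pf-alt-policies})), and $y_s\geq x_s$ by exhibiting, for each $k$, a precommitted control policy that plays the finite-horizon maximizers and guarantees $\eventually^{\leq k}S_G$ with probability at least $x_s^{(k)}$ against every adversary; the backward induction you sketch for that witness is sound. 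Your route buys something the paper's does not: the equality $y_s=x_s$, i.e., the Corollary on the insensitivity to whether the players alternate or precommit, becomes the substance of the proof rather than an afterthought, and the witness-policy direction is constructive and bypasses Lemma \ref{lem:minmax-duality} entirely. The one step you should not dismiss as a ``standard fact'' is the exchange $\inf_{\D}\sup_k\prob^{\C,\D}_{\M^{\A}}(s\models\eventually^{\leq k}S_G)=\sup_k\inf_{\D}\prob^{\C,\D}_{\M^{\A}}(s\models\eventually^{\leq k}S_G)$ needed for $y_s\leq x_s$: it is true, but it requires an argument, e.g., compactness of the adversary's policy space in the product topology (all action sets being finite) together with monotonicity and continuity of the finite-horizon values, or equivalently a K\"onig-type selection of a single adversary that is $k$-step optimal for every $k$. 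To be fair, the paper's own limit-passing step hides the same issue, so your proof is no less rigorous than the original; it is simply the one place where ``routine'' conceals real content.
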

\begin{proof}
The proof closely follows the proof of Proposition \ref{prop:xs-AMDP-alt}.
Roughly, we show, by induction on $k$ and applying Lemma \ref{lem:minmax-duality}, 
that for any $k \geq 0$ and $s \in S$,
$y_s^{(k)} = \sup_{\C} \inf_{\D} \prob^{\C, \D}_{\M^{\A}} (s \models \eventually^{\leq k} S_G)$.
We can conclude the proof using a similar argument as in the proof of Proposition \ref{prop:xs-AMDP-alt}.
\end{proof}

From Proposition \ref{prop:xs-AMDP-alt} and Proposition \ref{prop:xs-AMDP-nonalt},
we can conclude that the sequential game (\ref{eq:xs-AMDP-alt}) is equivalent to its
nonsequential counterpart (\ref{eq:xs-AMDP-nonalt}).

\begin{corollary}
The maximum worst-case probability of reaching a set $S_G$ of states 
in an AMDP $\M^{\A}$ does not depend on whether the controller and the adversary play
alternatively or both the control and adversarial policies are computed at the beginning of an execution. 
\end{corollary}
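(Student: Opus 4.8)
The plan is to observe that this corollary is an essentially immediate consequence of Propositions \ref{prop:xs-AMDP-alt} and \ref{prop:xs-AMDP-nonalt}, so the work is mostly a matter of matching the informal phrasing to the two quantities already defined. The ``alternating play'' semantics --- the controller and the adversary interleave their moves, each choosing its step-$n$ policy $\C_n$ (resp. $\D_n$) to optimize against the opponent's commitments so far --- is precisely the value $x_s$ in (\ref{eq:xs-AMDP-alt}). The ``computed at the beginning of an execution'' semantics --- full policies $\C : S^+ \to Act_C$ and $\D : S^+ \to Act_A$ are fixed before the run, with the adversary merely reacting to the announced control policy --- is precisely the value $y_s$ in (\ref{eq:xs-AMDP-nonalt}). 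Hence the claim is exactly the assertion $x_{s_{init}} = y_{s_{init}}$.

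First I would note that the two value-iteration recursions (\ref{eq:xsk-AMDP-alt}) and (\ref{eq:xsk-AMDP-nonalt}) are syntactically identical, as are their initializations ($x_s^{(0)} = y_s^{(0)} = 1$ for $s \in S_G$ and $0$ otherwise). A one-line induction on $k$ therefore gives $x_s^{(k)} = y_s^{(k)}$ for every $k \geq 0$ and every $s \in S$. Then, invoking the convergence statements of the two propositions, $x_s = \lim_{k\to\infty} x_s^{(k)} = \lim_{k\to\infty} y_s^{(k)} = y_s$ for all $s \in S$, and in particular at $s = s_{init}$. Since the maximum worst-case reachability probability under the alternating semantics is $x_{s_{init}}$ and under the committed semantics is $y_{s_{init}}$, the two coincide, which is the assertion of the corollary.

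There is no genuinely hard step: the substantive argument has already been carried out inside Proposition \ref{prop:xs-AMDP-alt}, where Lemma \ref{lem:minmax-duality} (applied using the completeness of the sets $\mathbb{C}_n$ of step control policies) is what lets the nested, per-step $\max$/$\min$ optimization of (\ref{eq:xs-AMDP-alt}) collapse, horizon by horizon, into exactly the single Bellman backup defining $y_s$. The only things to be careful about are the bookkeeping that identifies the phrases ``play alternatively'' and ``computed at the beginning'' with $x_s$ and $y_s$, and the remark that equality of the finite-horizon games $\eventually^{\leq k} S_G$ for every $k$ already forces equality of the infinite-horizon games, since $\eventually S_G = \bigcup_{k} \eventually^{\leq k} S_G$ and both value sequences are monotone and bounded in $[0,1]$. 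Accordingly I would state the proof in two or three lines: by Propositions \ref{prop:xs-AMDP-alt} and \ref{prop:xs-AMDP-nonalt} the iterates agree at every stage, so letting $k \to \infty$ yields $x_s = y_s$, and the claim follows.
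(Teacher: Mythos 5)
Your proposal is correct and matches the paper's intent exactly: the corollary is stated as an immediate consequence of Propositions \ref{prop:xs-AMDP-alt} and \ref{prop:xs-AMDP-nonalt}, whose value-iteration recursions and initializations are identical, so the iterates and hence the limits $x_s$ and $y_s$ coincide. The paper offers no further argument beyond this observation, so your two-to-three-line proof is essentially the same as (indeed slightly more explicit than) the paper's own.
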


\subsection{The Complete System as an AMDP}
\label{ssec:AMDP-construction}
We start by constructing an MDP $\M = (S, Act, \mathbf{P}, s_{init}, \Pi, L)$ 
that represents the complete system as described in Section \ref{ssec:comp}.
As discussed earlier, a state of $\M$ is of the form $\langle s^{pl}, s^{env}, \mathbf{B} \rangle$
where $s^{pl} \in S^{pl}$, $s^{env} \in S^{env}$ and $\mathbf{B} \in \mathbb{B}$.
The corresponding AMDP $\M_{\A}$ of $\M$ is then defined as
$\M_{\A} = (S, Act_C, Act_A, \mathbf{P}_{\A}, s_{init}, \Pi, L)$ where
$Act_C = Act$,
$Act_A = \{\beta_1, \ldots, \beta_N\}$ 
(i.e., $\beta_i$ corresponds to the environment choosing model $\M^{env}_i)$ and
for any $s^{pl}_1, s^{pl}_2 \in S^{pl}$, $s^{env}_1, s^{env}_2 \in S^{env}$, 
$\mathbf{B}_1, \mathbf{B}_2 \in \mathbb{B}$,
$\alpha \in Act_C$ and $1 \leq i \leq N$,
\begin{equation}
\label{eq:system-AMDP}
\begin{array}{l}
\mathbf{P}_{\A}(\langle s^{pl}_1, s^{env}_1, \mathbf{B}_1 \rangle, \alpha, \beta_i, \langle s^{pl}_2, s^{env}_2, \mathbf{B}_2 \rangle) = \\
\hspace{5mm}
\left\{ \begin{array}{ll}
\mathbf{P}^{pl}(s^{pl}_1, \alpha, s^{pl}_2)\mathbf{P}^{env}_i(s^{env}_1, s^{env}_2) 
&\hbox{if } \mathbf{B}_1(\M^{env}_i) > 0\\
0 &\hbox{otherwise}
\end{array}\right.,
\end{array}
\end{equation}
if $\tau(\mathbf{B}_1, s, s') = \mathbf{B}_2$;
otherwise, $\mathbf{P}_{\A}(\langle s^{pl}_1, s^{env}_1, \mathbf{B}_1 \rangle, \alpha, \beta_i, \langle s^{pl}_2, s^{env}_2, \mathbf{B}_2 \rangle) = 0$.

It is straightforward to check that $\M_\A$ is a valid AMDP.
Furthermore, based on this construction and the assumptions that 
(1) at any plant state $s^{pl} \in S^{pl}$, 
there exists an action that is enabled in $s^{pl}$, and 
(2) at any point in an execution, the belief $\mathbf{B} \in \mathbb{B}$ satisfies $\sum_{1 \leq i \leq N} \mathbf{B}(\M^{env}_i) = 1$, 
it can be shown that at any state $s \in S$, 
there exists a control action $\alpha \in Act_C$ and an adversarial action $\beta \in Act_A$
that are enabled in $s$.
In addition, consider the case where the environment is in state $s^{env} \in S^{env}$ with belief $\mathbf{B} \in \mathbb{B}$.
It can be shown that for all $\M^{env}_i \in \mathbf{M}^{env}$,
if $\mathbf{B}(\M^{env}_i) > 0$, then $\beta_i$ is enabled in 
$\langle s^{pl}, s^{env}, \mathbf{B} \rangle$ for all $s^{pl} \in S^{pl}$.
Thus, we can conclude that $\M_\A$ represents the complete system for Problem \ref{prob:worst-case}.

\begin{remark}
\label{remark:AMDP-construction}
According to (\ref{eq:system-AMDP}),
the system does not need to maintain the exact belief in each state.
The only information needed to construct an AMDP that represents the complete system
is all the possible modes of the environment in each state of the complete system.
This allows us to integrate methodologies for discrete state estimation \cite{Vecchio06discretestate}
to reduce the size of the AMDP.
This direction is subject to future work.
\end{remark}

\subsection{Control Policy Synthesis for AMDP}

Similar to control policy synthesis for MDP, control policy synthesis for AMDP $\M^{\A}$ can be done
on the basis of a product construction.
The product of $\M^{\A} = (S, Act_C, Act_A, \mathbf{P}, s_{init}, \Pi, L)$ and 
DRA $\A = (Q, 2^\Pi, \delta, q_{init}, Acc)$ is an AMDP 
$\M^{\A}_p = (S_p, Act_C, Act_A, \mathbf{P}_p, s_{p, init}, \Pi_p, L_p)$, which is defined similar
to the product of MDP and DRA, except that the set of actions is partitioned into
the set of control and the set of adversarial actions.

Following the steps for synthesizing a control policy for product MDP,
we identify the AMECs of $\M^{\A}_p$.
An AMEC of $\M^{\A}_p$ is defined
based on the notion of end component as for the case of product MDP.
However, an end component of $\M^{\A}_p$ needs to be defined, 
taking into account the adversary.
Specifically, an end component of $\M^{\A}_p$ is a pair
$(T,A)$ where $\emptyset \not= T \subseteq S_p$ and $A : T \to 2^{Act_C}$ such that
(1) $\emptyset \not= A(s) \subseteq Act_C(s)$ for all $s \in T$,
(2) the directed graph induced by $(T,A)$ under any adversarial policy is strongly connected, and
(3) for all $s \in T$, $\alpha \in A(s)$ and $\beta \in Act_A(s)$, 
$\{t \in S_p \hspace{1mm}|\hspace{1mm} \mathbf{P}_p(s, \alpha, \beta, t) > 0\} \subseteq T$.

Using a similar argument as in the case of product MDP \cite{Baier:PMC2008},
it can be shown that 
the maximum worst-case probability for $\M^{\A}$ to satisfy $\varphi$
is equivalent to the maximum worst-case probability of reaching a states in an AMEC of $\M^{\A}_p$.
We can then apply Proposition \ref{prop:xs-AMDP-alt} and Proposition \ref{prop:xs-AMDP-nonalt}
to compute $x_s$, which is equivalent to $y_s$, using value iteration.
A control policy for $\M^{\A}_p$ that maximizes the worst-case probability
for $\M^{\A}$ to satisfy $\varphi$ can be constructed as outlined at the end of Section \ref{ssec:policy}
for product MDP.

\section{Example}
\label{sec:ex}
Consider, once again, the autonomous vehicle problem described in Example \ref{ex:ped-models}.
Suppose the road is discretized into 9 cells $c_0, \ldots, c_8$ as shown in Figure \ref{fig:example}.
The vehicle starts in cell $c_0$ and has to reach cell $c_8$ whereas
the pedestrian starts in cell $c_1$.
The models of the vehicle and the pedestrian are shown in Figure \ref{fig:models}.
The vehicle has two actions $\alpha_1$ and $\alpha_2$, which correspond to decelerating and accelerating, respectively.
The pedestrian has 2 modes $\M^{env}_1$ and $\M^{env}_2$, which correspond to the cases where
s/he wants to remain on the left side of the road and cross the road, respectively.
A DRA $\A_\varphi$ that accepts all and only words that satisfy
$\varphi = \left(\neg \bigvee_{j \geq 0} (c^{pl}_j \aand c^{env}_j)\right) \until c^{pl}_8$
is shown in Figure \ref{fig:Aphi}.
Finally, we consider the set $\mathbb{B} = \{\mathbf{B}_0, \ldots, \mathbf{B}_8\}$ of beliefs where for all $i$,
$\mathbf{B}_i(\M^{env}_1) = 0.1i$ and $\mathbf{B}_i(\M^{env}_2) = 1-0.1i$.
We set $\mathbf{B}_{init} = \mathbf{B}_6$ where it is equally likely that the pedestrian is
in mode $\M^{env}_1$ or mode $\M^{env}_2$.
The belief update function $\tau$ is defined such that
the longer the pedestrian stay on the left side of the road,
the probability that s/he is in mode $\M^{env}_1$ increases.
Once the pedestrian starts crossing the road, we change the belief to $\mathbf{B}_0$ where
it is certain that the pedestrian is in mode $\M^{env}_2$.
Specifically, for all $i$, we let
\begin{equation}
\label{eq:ex-tau}
\tau(\mathbf{B}_i, s, s') = 
\left\{ \begin{array}{ll}
\mathbf{B}_0 &\hbox{if } s' \in \{c_2, c_4, c_6, c_7\}\\
\mathbf{B}_i &\hbox{if } i = 0 \hbox{ or } i = 8\\
\mathbf{B}_{i+1} &\hbox{otherwise}
\end{array}\right..
\end{equation}

\begin{figure}
\centering\includegraphics[width=0.4\textwidth]{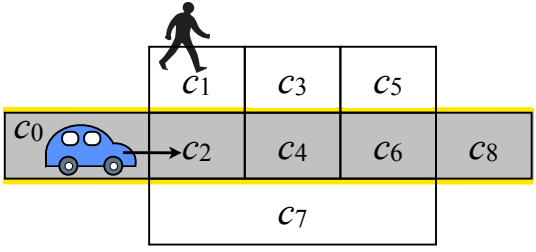}
\caption{The road and its partition used in the autonomous vehicle example.}
\label{fig:example}
\end{figure}

\begin{figure}
\centering 
\subfigure[Vehicle model $\M^{pl}$]{
	\hspace{-10mm}
	\begin{tikzpicture}[->,>=stealth',shorten >=1pt,auto,node distance=2cm]
	  \tikzstyle{every state}=[circle,thick,draw=blue!75,minimum size=6mm]

	   \node [state] (c0) at (0,0) {$c_0$};
	   \node [state] (c2) at (1.8,0){$c_2$};
	   \node [state] (c4) at (3.6,0) {$c_4$};
	   \node [state] (c6) at (5.4,0) {$c_6$};
	   \node [state] (c8) at (7.2,0) {$c_8$};
	   
	   \draw [ shorten >= 1pt, -> ] (-0.9,0) to (c0);
	   
	   \path (c0) edge [loop above] node {$\alpha_1, 0.9$} (c0)
	   		  edge [loop above,in=50,out=130,looseness=12] node {$\alpha_2, 0.1$} (c0)
	                     edge node [above] {$\alpha_1, 0.1$} (c2)
	                     edge [bend right ] node [below] {$\alpha_2, 0.9$} (c2)
	             (c2) edge [loop above] node {$\alpha_1, 0.9$} (c2)
	   		  edge [loop above,in=50,out=130,looseness=12] node {$\alpha_2, 0.1$} (c2)
	                     edge node [above] {$\alpha_1, 0.1$} (c4)
	                     edge [bend right ] node [below] {$\alpha_2, 0.9$} (c4)
	             (c4) edge [loop above] node {$\alpha_1, 0.9$} (c4)
	   		  edge [loop above,in=50,out=130,looseness=12] node {$\alpha_2, 0.1$} (c4)
	                     edge node [above] {$\alpha_1, 0.1$} (c6)
	                     edge [bend right ] node [below] {$\alpha_2, 0.9$} (c6)
	             (c6) edge [loop above] node {$\alpha_1, 0.9$} (c6)
	   		  edge [loop above,in=50,out=130,looseness=12] node {$\alpha_2, 0.1$} (c6)
	                     edge node [above] {$\alpha_1, 0.1$} (c8)
	                     edge [bend right ] node [below] {$\alpha_2, 0.9$} (c8)
	             (c8) edge [loop above] node {$\alpha_1, 1$} (c8);
	\end{tikzpicture}
}

\subfigure[Pedestrian model $\M^{env}_1$]{
	\begin{tikzpicture}[->,>=stealth',shorten >=1pt,auto,node distance=2cm]
	  \tikzstyle{every state}=[circle,thick,draw=blue!75,minimum size=6mm]

	   \node [state] (c1) at (1,0) {$c_1$};
	   \node [state] (c3) at (3,0){$c_3$};
	   \node [state] (c5) at (5,0) {$c_5$};
	   
	    \draw [ shorten >= 1pt, -> ] (0.1,0) to (c1);
	   
	   \path (c1) edge [loop above] node {$0.4$} (c1)
	                     edge [bend left] node {$0.6$} (c3)
	             (c3) edge [loop above] node {$0.3$} (c3)
	                     edge [bend left] node {$0.35$} (c1)
	                     edge [bend left] node {$0.35$} (c5)
	             (c5) edge [loop above] node {$0.7$} (c5)
	                     edge [bend left] node {$0.3$} (c3);
	\end{tikzpicture}
}

\subfigure[Pedestrian model $\M^{env}_2$]{
	\begin{tikzpicture}[->,>=stealth',shorten >=1pt,auto,node distance=2cm ,bend angle=15]
	  \tikzstyle{every state}=[circle,thick,draw=blue!75,minimum size=6mm]

	   \node [state] (c1) at (1,0) {$c_1$};
	   \node [state] (c2) at (4,0.9) {$c_2$};
	   \node [state] (c3) at (3,-0.9) {$c_3$};
	   \node [state] (c4) at (6,0) {$c_4$};
	   \node [state] (c5) at (6,-1.7) {$c_5$};
	   \node [state] (c6) at (8,-1.7) {$c_6$};
	   \node [state] (c7) at (8,0) {$c_7$};
	   
    	   \draw [ shorten >= 1pt, -> ] (0.1,0) to (c1);
	   
	   \path (c1) edge [loop above] node {$0.3$} (c1)
	                     edge [] node {$0.6$} (c2)
	                     edge [bend left] node {$0.1$} (c3)
	             (c2) edge [bend left] node {$1$} (c7)
	             (c3) edge [loop above] node {$0.3$} (c3)
	                     edge [bend left] node {$0.05$} (c1)
	                     edge [] node {$0.6$} (c4)
	                     edge [bend left] node {$0.05$} (c5)
	             (c4) edge [] node {$1$} (c7)
	             (c5) edge [loop above] node {$0.3$} (c5)
	                     edge [] node {$0.6$} (c6)
	                     edge [bend left] node {$0.1$} (c3)
	             (c6) edge [] node {$1$} (c7)
	             (c7) edge [loop above] node {$1$} (c7);
	\end{tikzpicture}
}

   \caption{Vehicle and pedestrian models.}
   \label{fig:models}
\end{figure}
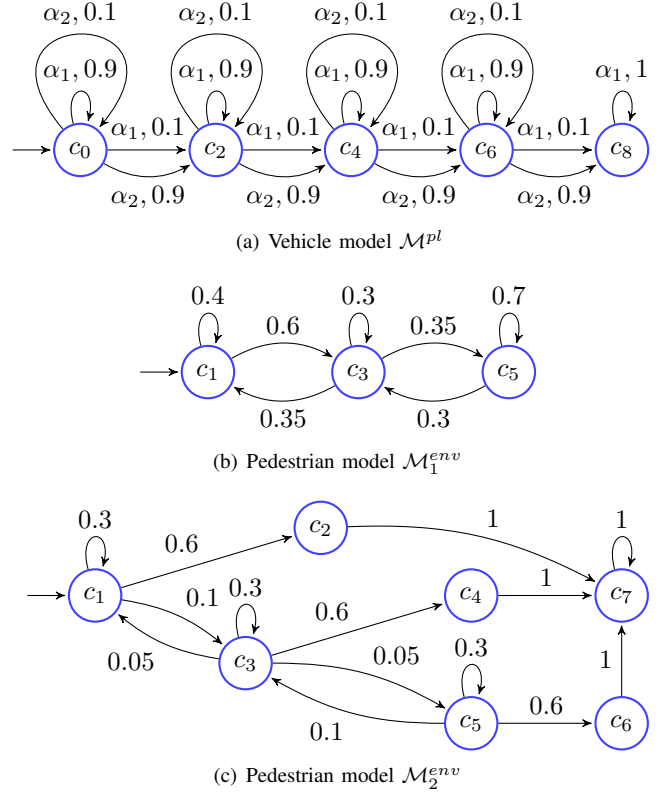

\begin{figure}
\centering 
\begin{tikzpicture}[->,>=stealth',shorten >=1pt,auto,node distance=2cm,bend angle=15]
  \tikzstyle{every state}=[circle,thick,draw=blue!75,minimum size=6mm]

   \node [state] (q0) at (1,0) {$q_0$};
   \node [state] (q1) at (5,0.5){$q_1$};
   \node [state] (q2) at (6,-0.5) {$q_2$};
   
    \draw [ shorten >= 1pt, -> ] (0.1,0) to (q0);
   
   \path (q0) edge [loop above] node {$\neg col \aand \neg c^{pl}_8$} (q0)
                     edge [bend left] node {$c^{pl}_8$} (q1)
                     edge [bend right] node [below] {$\hspace{-3mm}col \aand \neg c^{pl}_8$} (q2)
             (q1) edge [loop above] node {$\true$} (q1)
             (q2) edge [loop above] node {$\true$} (q2);
\end{tikzpicture}
   \caption{A DRA $\A_\varphi$ that recognizes the prefixes of
   $\varphi = \neg col \until c^{pl}_8$ where the collision event $col$ is defined as $col = \bigvee_{j \geq 0} (c^{pl}_j \aand c^{env}_j)$.
   The acceptance condition is $Acc = \{(\emptyset, \{q_1\})\}$.}
   \label{fig:Aphi}
\end{figure}
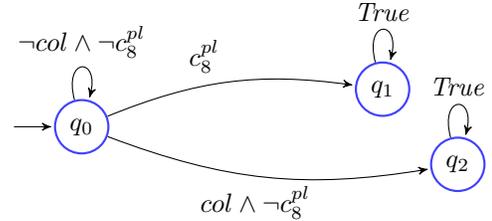

Both the expectation-based and the worst-case-base control policy synthesis
as described in Section \ref{sec:syn-MDP} and \ref{sec:syn-AMDP} is implemented in MATLAB.
The computation was performed on a MacBook Pro with a 2.8 GHz Intel Core 2 Duo processor.

First, we consider the expectation-based control policy synthesis (i.e., Problem \ref{prob:expected}).
As outlined in Section \ref{sec:syn-MDP}, we first construct the MDP that represents the complete system.
After removing all the unreachable states, the resulting MDP contains 65 states and the product MDP
contains 53 states.
The computation time is summarized in Table \ref{tab:comp-time}.
Note that the computation, especially the product MDP construction,
can be sped up significantly if a more efficient representation of DRA is used.
The maximum expected probability for the system to satisfy $\varphi$ is 0.9454.
Examination of the resulting control policy shows that this maximum expected probability of satisfying $\varphi$
can be achieved by applying action $\alpha_1$, i.e., decelerating, until the pedestrian crosses the street 
or the vehicle is not behind the pedestrian in the longitudinal direction, i.e.,
when the vehicle is in cell $c_i$ and the pedestrian is in cell $c_j$ where $j < i$.
(Based on the expectation, the probability that the pedestrian eventually crosses the road is 1 according to the probability measure
defined in (\ref{eq:prob-measure}).)
If we include the belief $\mathbf{B}$ where $\mathbf{B}(\M^{env}_1) = 1$ and $\mathbf{B}(\M^{env}_2) = 0$,
then the expectation-based optimal control policy is such that the vehicle applies $\alpha_1$
until either the pedestrian crosses the road, the vehicle is not behind the pedestrian in the longitudinal direction
or the belief is updated to $\mathbf{B}$, at which point, it applies $\alpha_2$.
Once the vehicle reaches the destination $c_8$, it applies $\alpha_1$ forever.

Next, we consider the worst-case-based control policy synthesis (i.e., Problem \ref{prob:worst-case}).
In this case, the resulting AMDP contains 49 states and the product AMDP contains 53 states after removing
all the unreachable states.
The maximum worst-case probability for the system to satisfy $\varphi$ is 0.9033.
The resulting control policy is slightly more aggressive than the expectation-based policy.
In addition to the cases where the expectation-based controller applies $\alpha_2$,
the worst-case-based controller also applies $\alpha_2$ when the vehicle is in $c_0$ and the pedestrian is in $c_3$
and when the vehicle is in $c_2$ and the pedestrian is in $c_5$.

\begin{table}[h]
\centering
\begin{tabular}{ | l | c | c | c | c | c |}
  \hline                
  & 
  \hspace{-2mm}$\begin{array}{c}\hbox{MDP /}\\ \hbox{AMDP}\end{array}$\hspace{-2mm} & 
  \hspace{-3mm}$\begin{array}{c}\hbox{product}\\ \hbox{MDP / AMDP}\end{array}$\hspace{-3mm} & 
  \hspace{-2mm}$\begin{array}{c}\hbox{Prob}\\ \hbox{vector}\end{array}$\hspace{-2mm} & 
  \hspace{-2mm}$\begin{array}{c}\hbox{Control}\\ \hbox{policy}\end{array}$\hspace{-2mm} & 
  \hspace{-1mm}Total\hspace{-1mm} \\
  \hline
  \hline  
  \hspace{-1mm}Expectation \hspace{-1mm} & 0.05 & 2.31 & 0.73 & 0.08 & 3.17\\
  \hline
  \hspace{-1mm}Worst-case \hspace{-1mm} & 0.20 & 2.73 & 0.46 & 0.05 & 3.44\\
  \hline  
\end{tabular}
\caption{Time required (in seconds) for each step of computation.}
\label{tab:comp-time}
\end{table}

Simulation results are shown in Figure \ref{fig:results-expected} and Figure \ref{fig:results-worst}.
The smaller (red) rectangle represents the pedestrian whereas the bigger (blue) rectangle represents the vehicle.
The filled and unfilled rectangles represent their current positions and the trace of their trajectories, respectively.
Notice that the vehicle successfully reaches its goal without colliding with the pedestrian,
as required by its specification, with the worst-case-based controller being slightly more aggressive.

Finally, we would like to note that due to the structure of this example,
the worst-case-based synthesis problem can be solved without having to deal with the belief space at all.
As the environment cannot be in state $c_2$, $c_4$, $c_6$ or $c_7$ when it is in mode $\M^{env}_1$
and these states only have transitions among themselves,
once the environment transitions to one of these states, we know for sure that it can only be in mode $\M^{env}_2$ and
cannot change its mode anymore.
In states $c_1$, $c_3$ and $c_5$, the environment can be in either mode.
Based on this structure and Remark \ref{remark:AMDP-construction}, 
we can construct an AMDP that represents the complete system with smaller number of states
than $\M^\A$ constructed using the method described in Section \ref{ssec:AMDP-construction}.
Exploiting the structure of the problem to reduce the size of AMDP is subject to future work.

\begin{figure}
\subfigure[$t=0$]{
	\centering\includegraphics[trim=3.5cm 10cm 2.5cm 10cm, clip=true, width=0.22\textwidth, height=1.8cm]{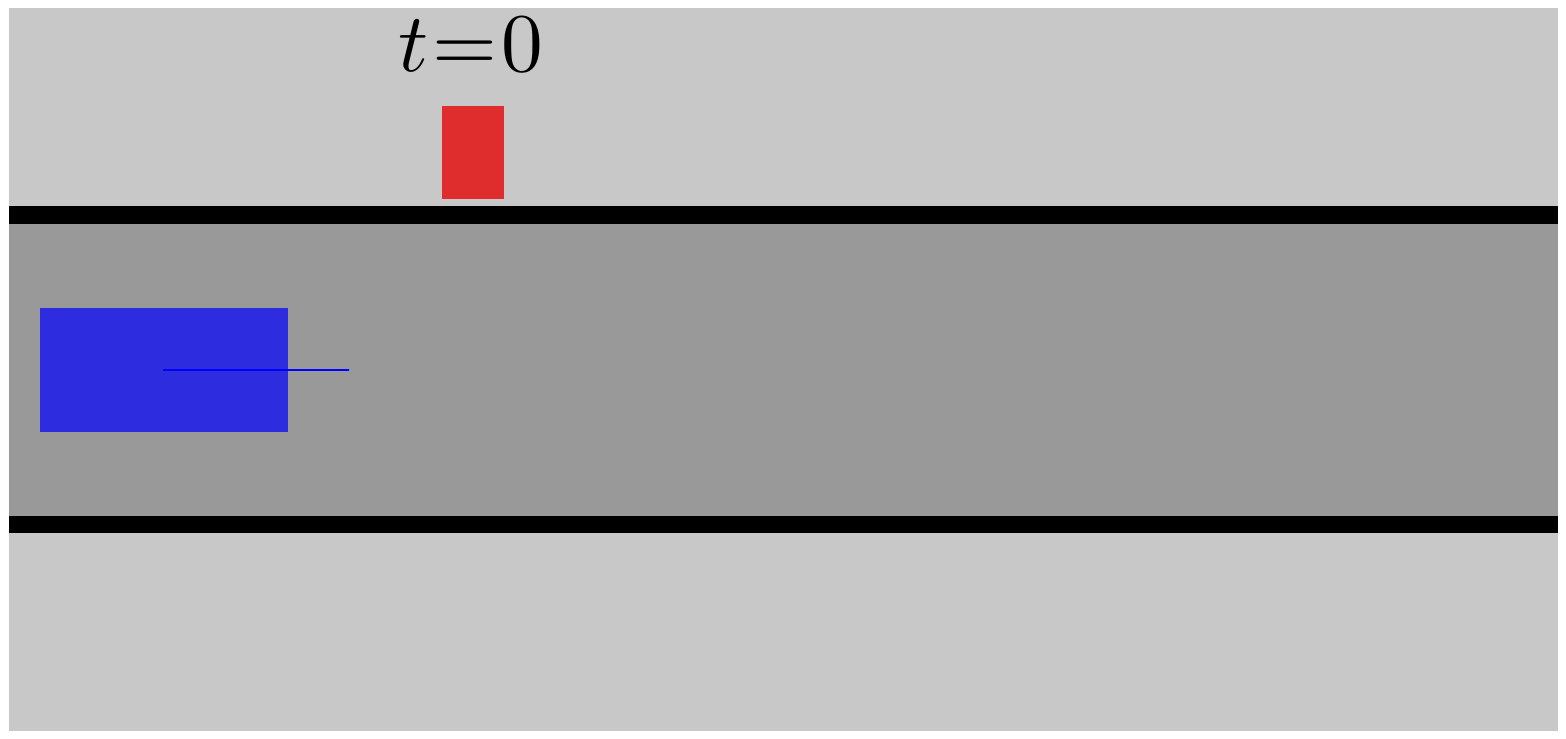}
}
\hfill
\subfigure[$t=1$]{
	\centering\includegraphics[trim=3.5cm 10cm 2.5cm 10cm, clip=true, width=0.22\textwidth, height=1.8cm]{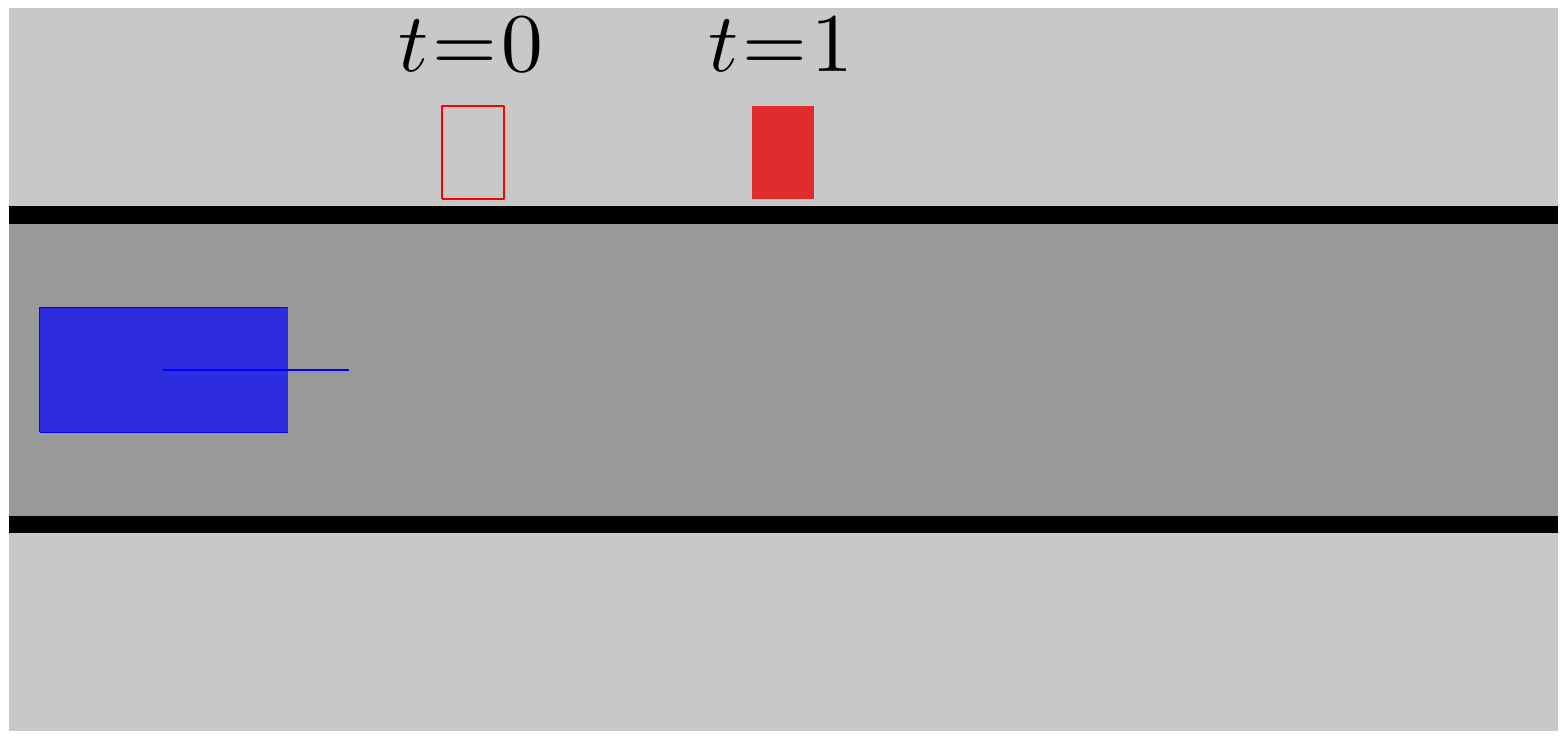}
}
\hfill
\subfigure[$t=2$]{
	\centering\includegraphics[trim=3.5cm 10cm 2.5cm 10cm, clip=true, width=0.22\textwidth, height=1.8cm]{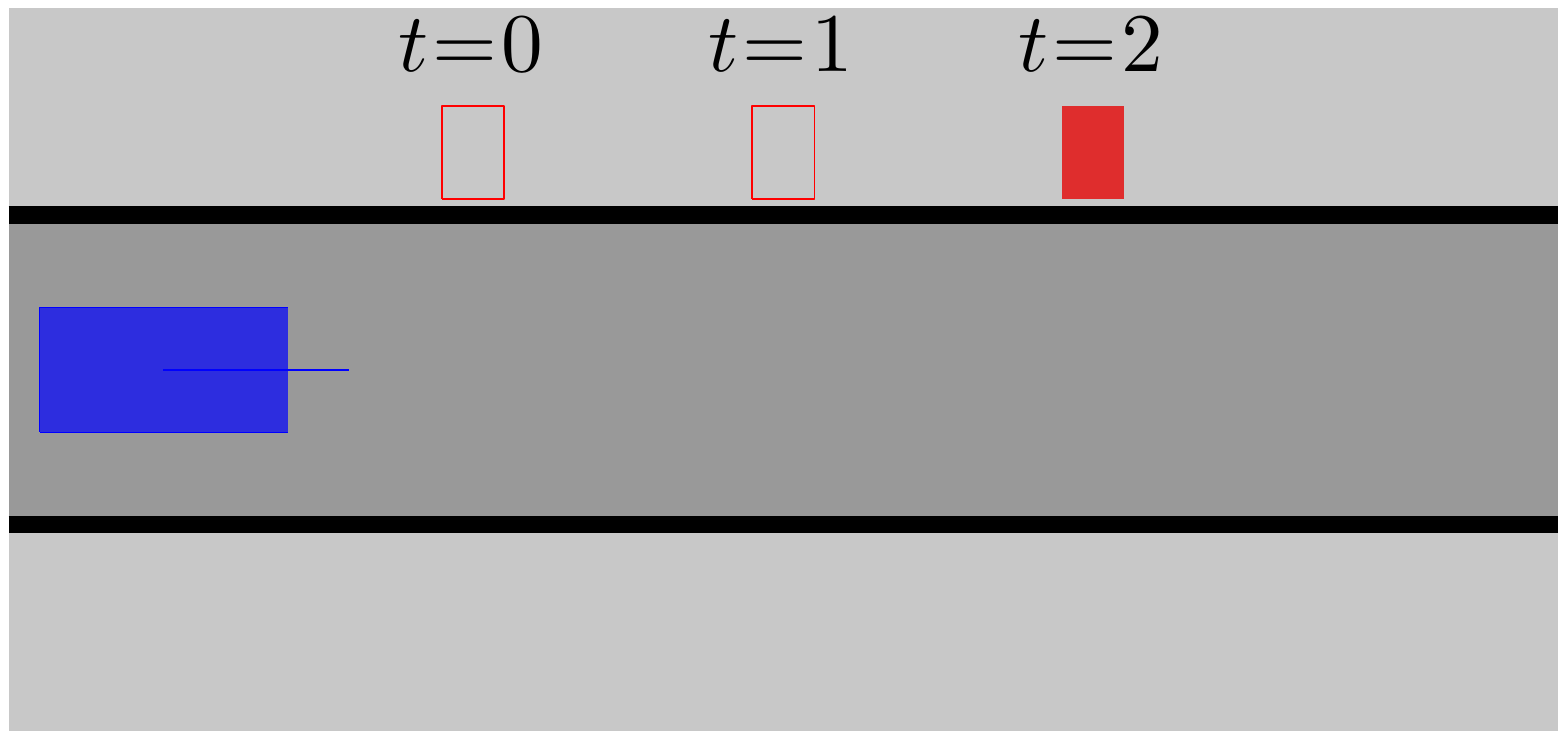}
}
\hfill
\subfigure[$t=3$]{
	\centering\includegraphics[trim=3.5cm 10cm 2.5cm 10cm, clip=true, width=0.22\textwidth, height=1.8cm]{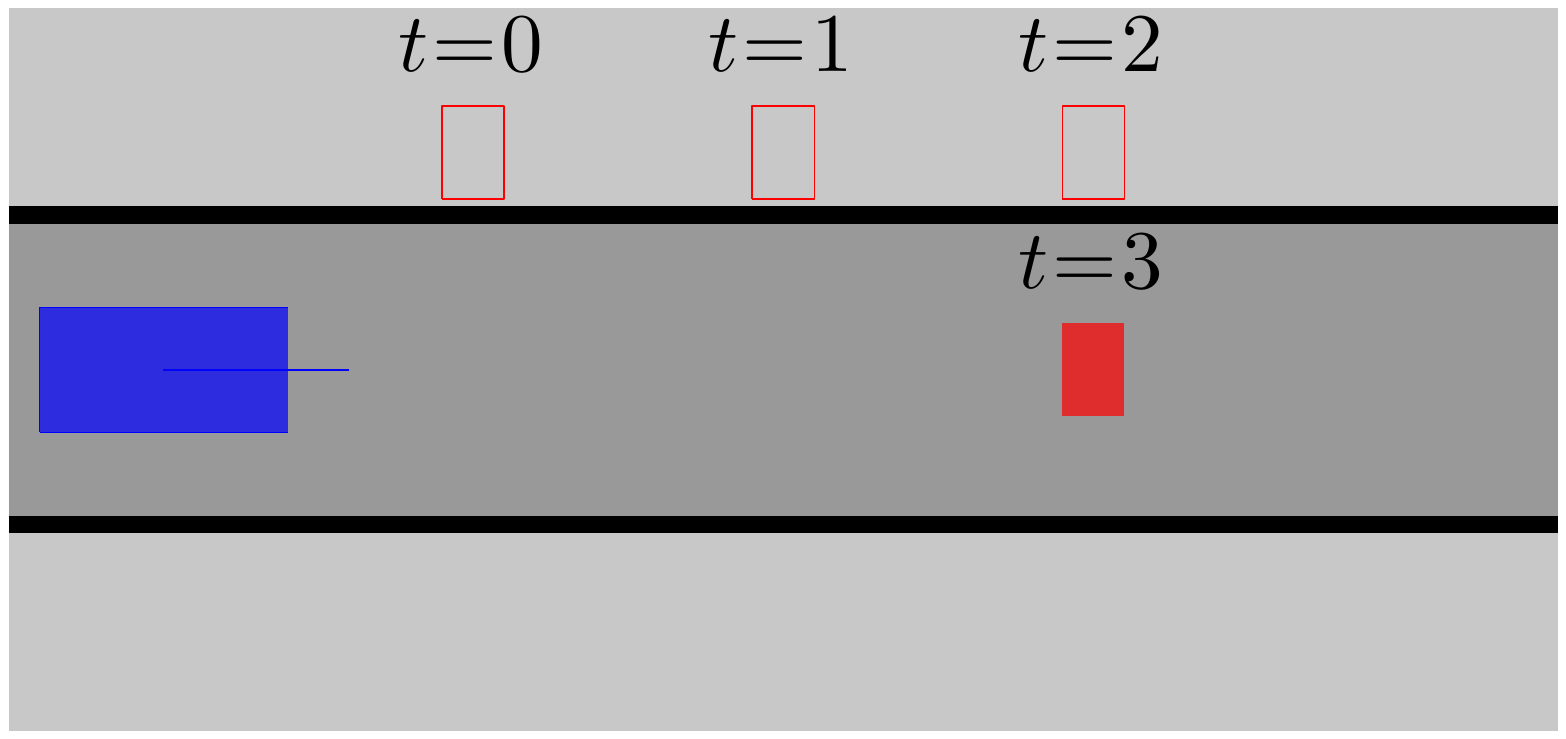}
}
\hfill
\subfigure[$t=4$]{
	\centering\includegraphics[trim=3.5cm 10cm 2.5cm 10cm, clip=true, width=0.22\textwidth, height=1.8cm]{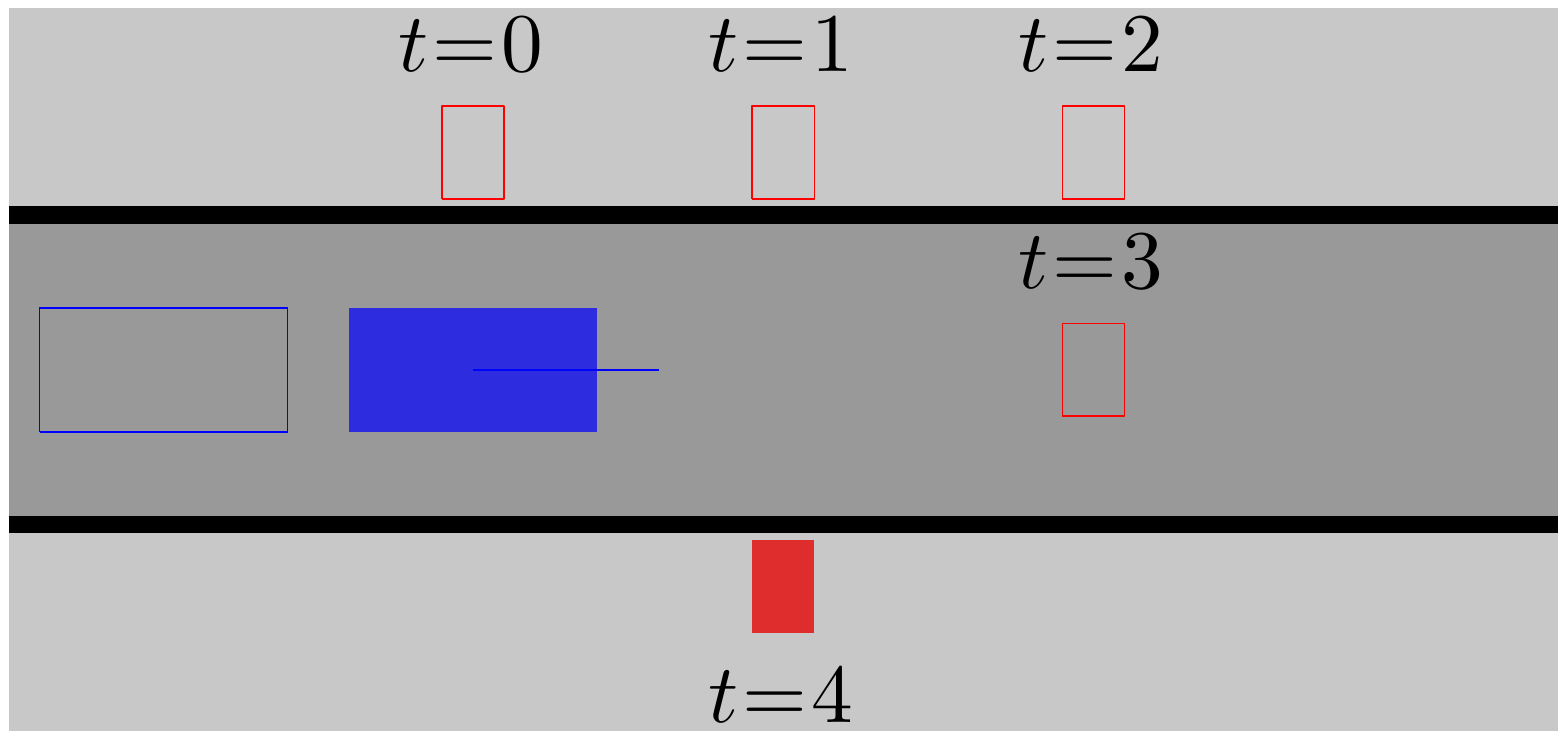}
}
\hfill
\subfigure[$t=5$]{
	\centering\includegraphics[trim=3.5cm 10cm 2.5cm 10cm, clip=true, width=0.22\textwidth, height=1.8cm]{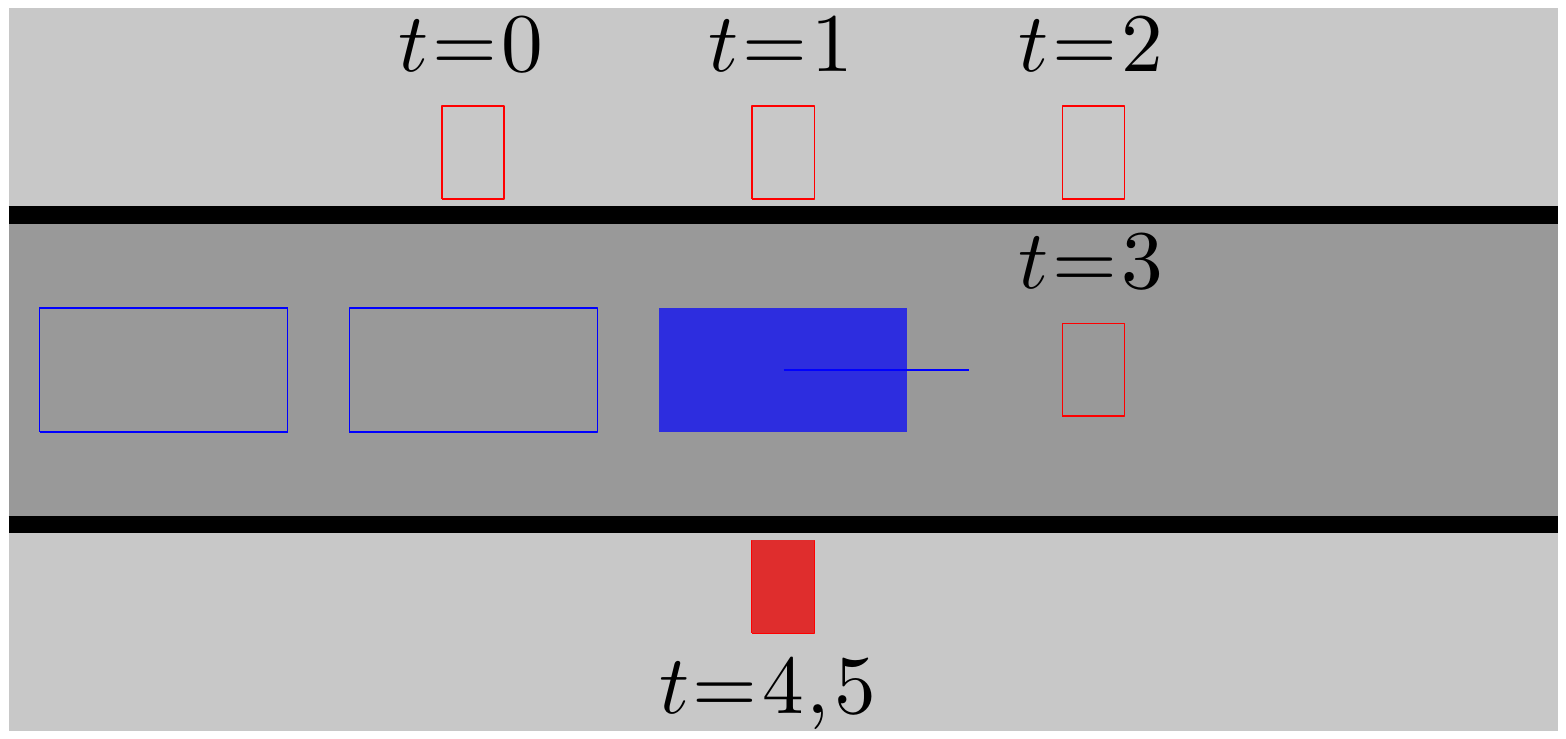}
}
\hfill
\subfigure[$t=6$]{
	\centering\includegraphics[trim=3.5cm 10cm 2.5cm 10cm, clip=true, width=0.22\textwidth, height=1.8cm]{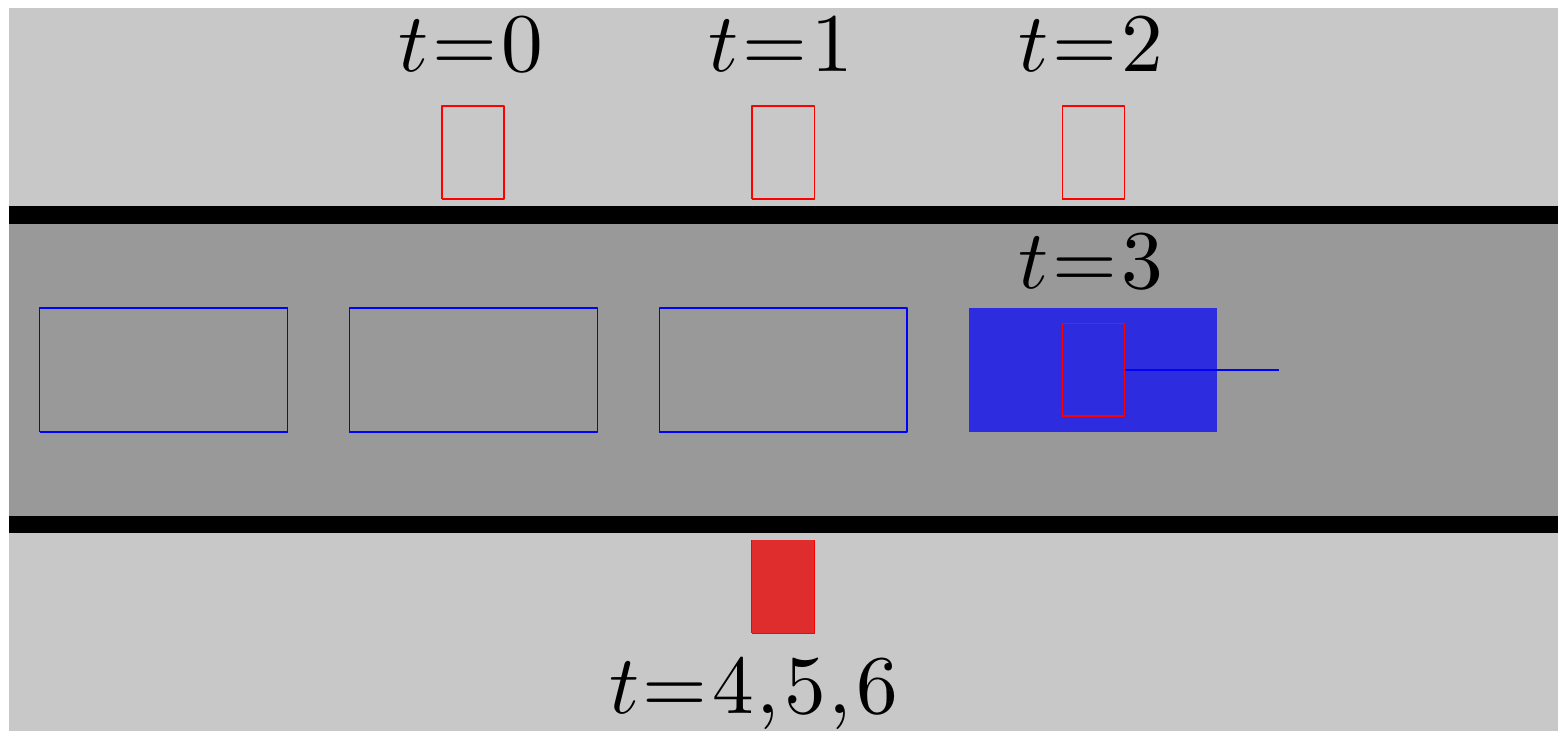}
}
\hfill
\subfigure[$t=7$]{
	\centering\includegraphics[trim=3.5cm 10cm 2.5cm 10cm, clip=true, width=0.22\textwidth, height=1.8cm]{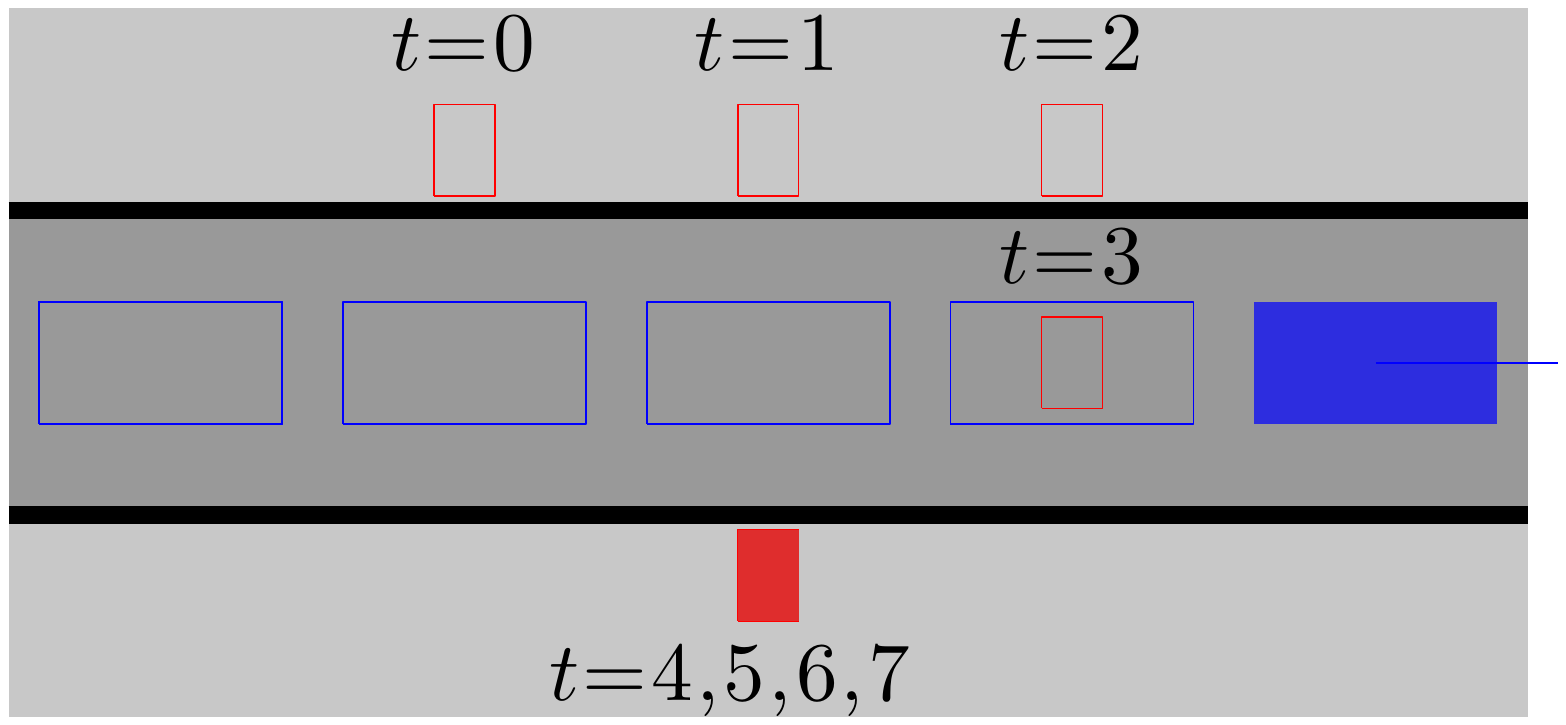}
}
\caption{Expectation-based control policy. 
At time $0 \leq t < 3$, the vehicle applies $\alpha_1$.
$\alpha_2$ is applied at time $3 \leq t < 7$, after which the vehicle reaches the goal and applies $\alpha_1$ forever. }
\label{fig:results-expected}
\end{figure}

\begin{figure}
\subfigure[$t=0$]{
	\centering\includegraphics[trim=3.5cm 10cm 2.5cm 10cm, clip=true, width=0.22\textwidth, height=1.8cm]{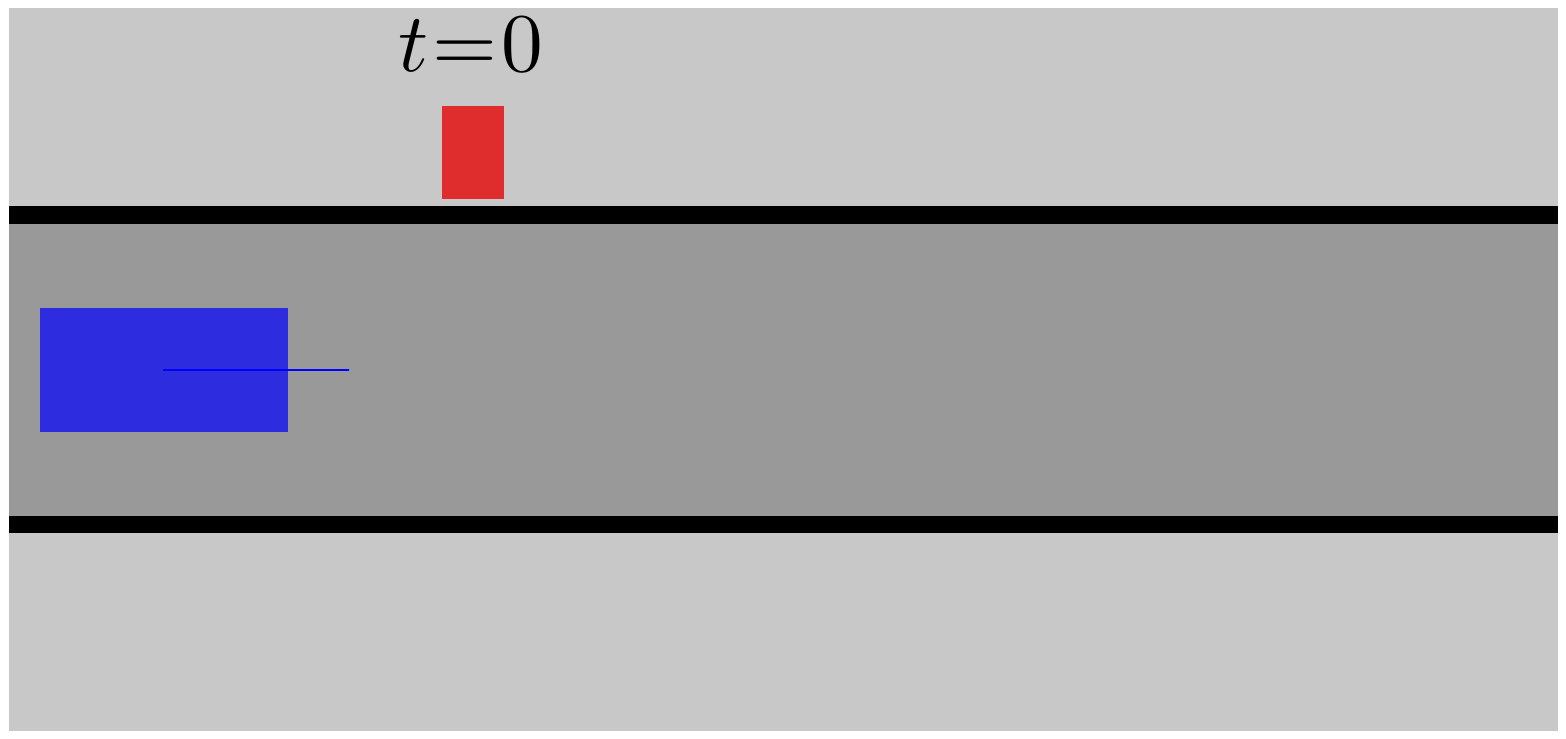}
}
\subfigure[$t=1$]{
	\centering\includegraphics[trim=3.5cm 10cm 2.5cm 10cm, clip=true, width=0.22\textwidth, height=1.8cm]{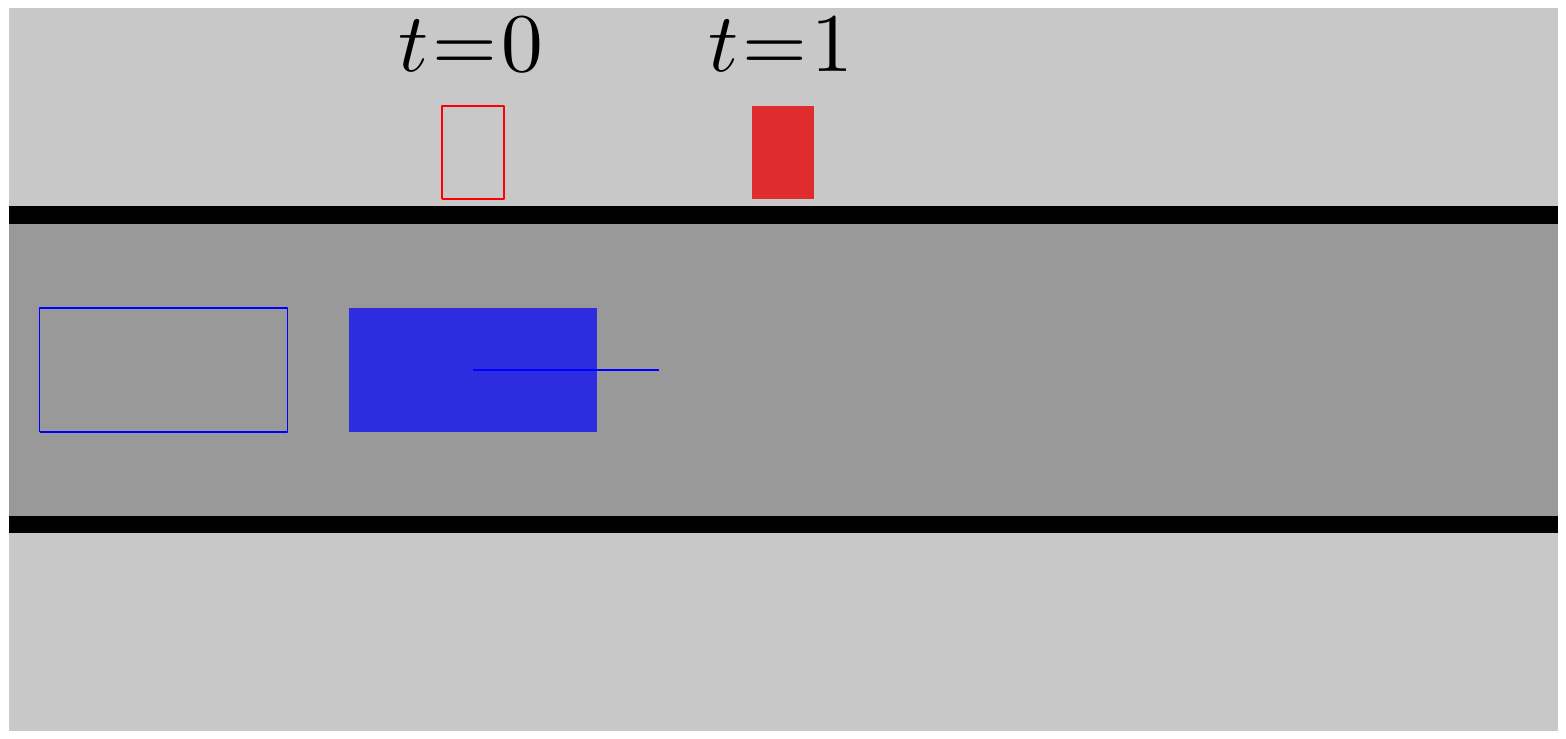}
}
\subfigure[$t=2$]{
	\centering\includegraphics[trim=3.5cm 10cm 2.5cm 10cm, clip=true, width=0.22\textwidth, height=1.8cm]{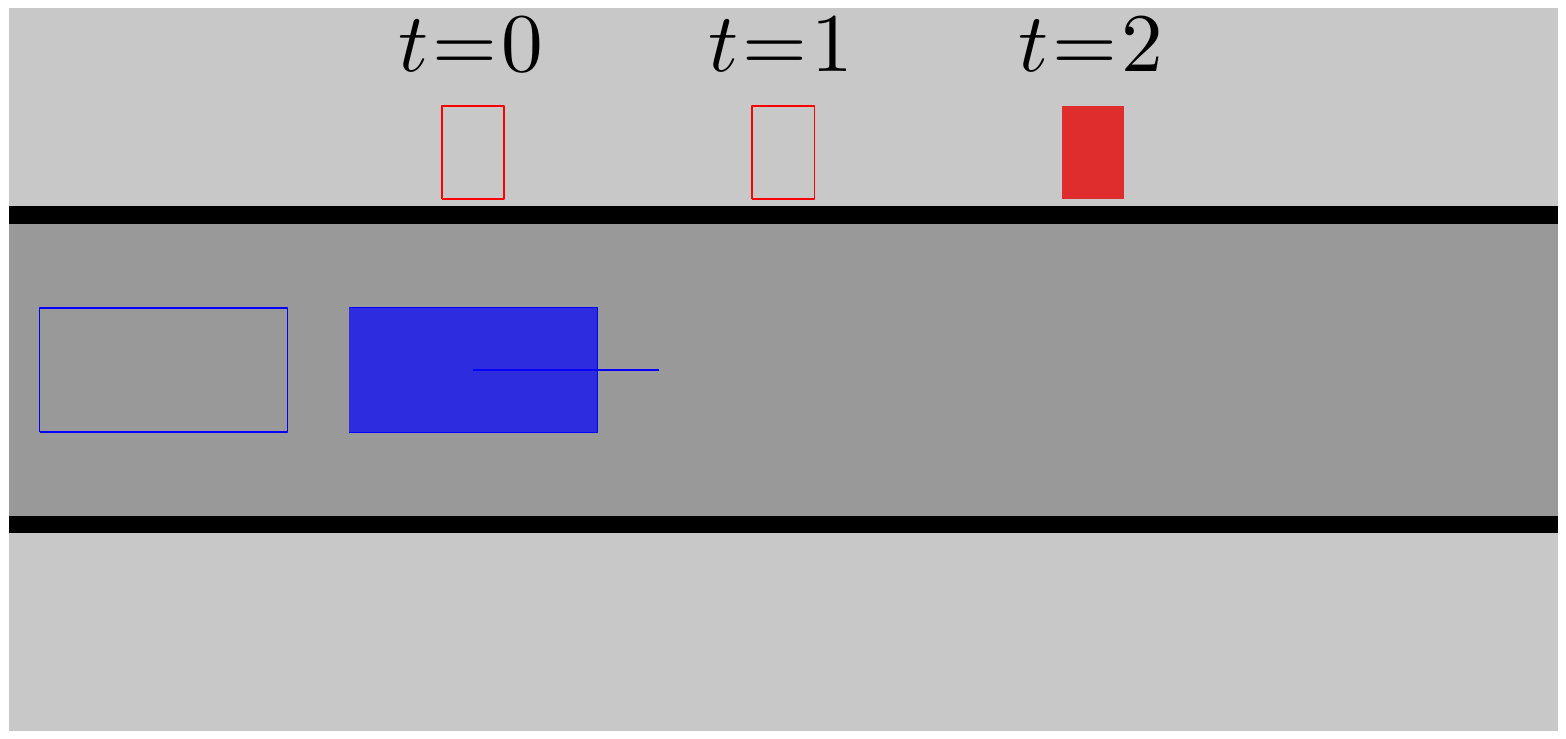}
}
\hfill
\subfigure[$t=3$]{
	\centering\includegraphics[trim=3.5cm 10cm 2.5cm 10cm, clip=true, width=0.22\textwidth, height=1.8cm]{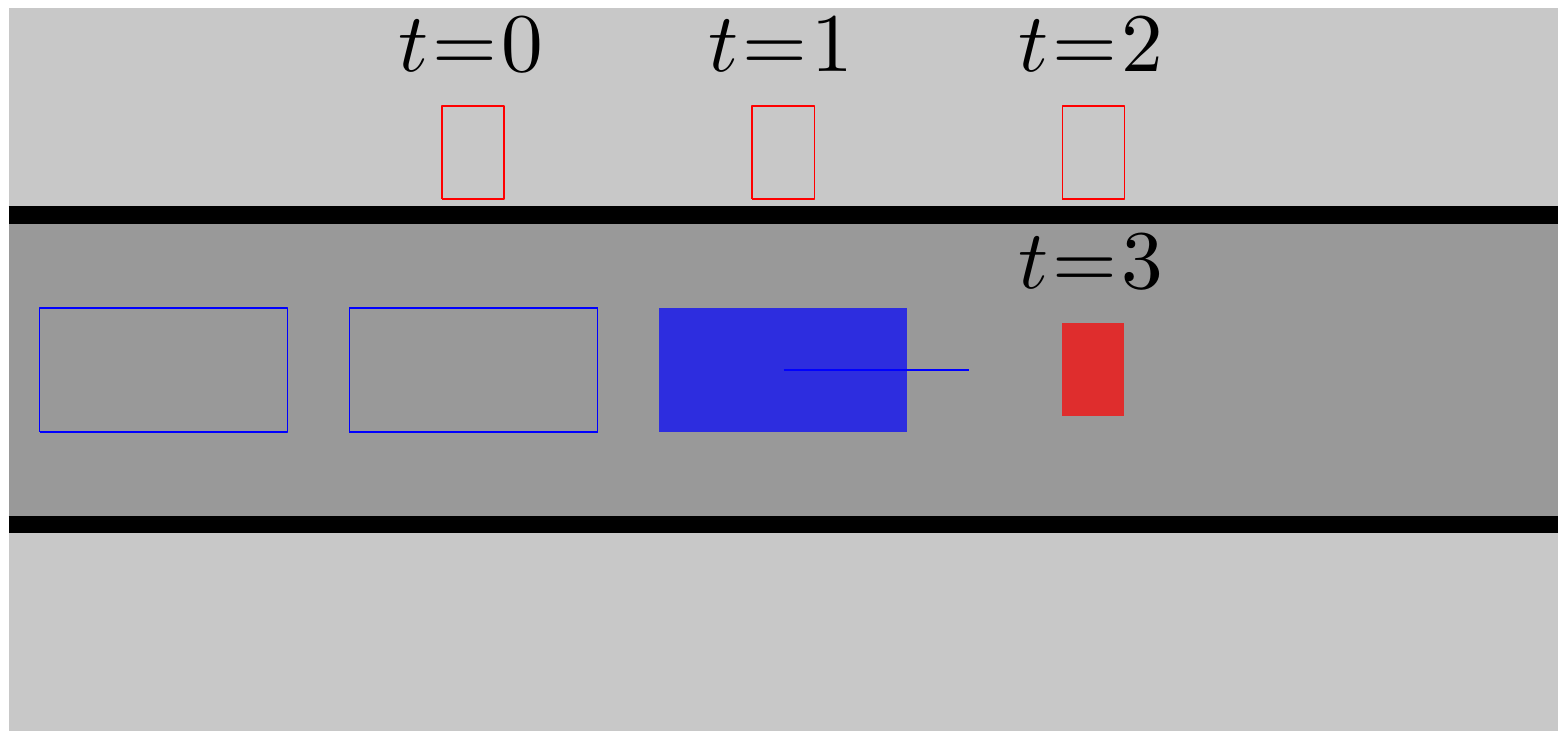}
}
\hfill
\subfigure[$t=4$]{
	\centering\includegraphics[trim=3.5cm 10cm 2.5cm 10cm, clip=true, width=0.22\textwidth, height=1.8cm]{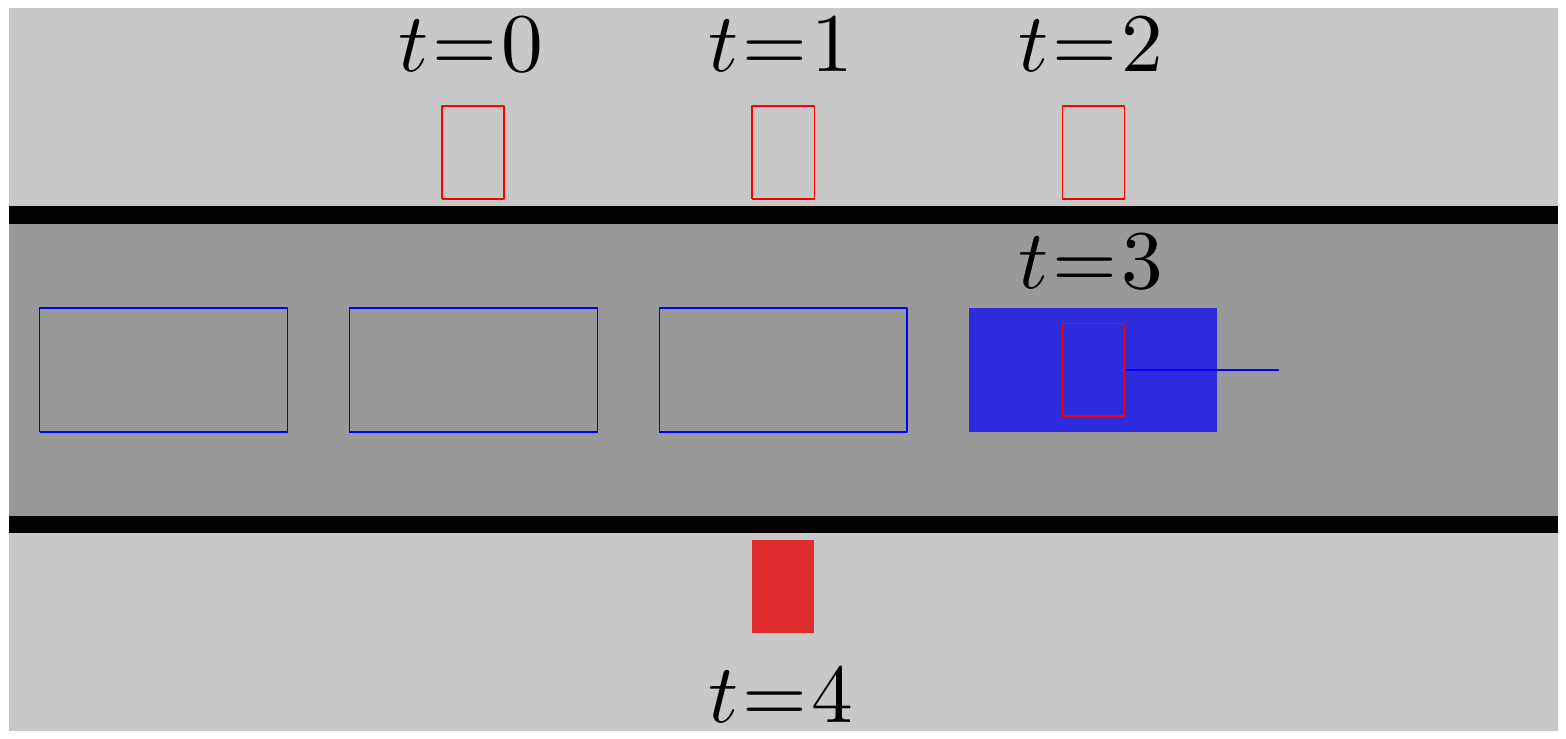}
}
\hfill
\subfigure[$t=5$]{
	\centering\includegraphics[trim=3.5cm 10cm 2.5cm 10cm, clip=true, width=0.22\textwidth, height=1.8cm]{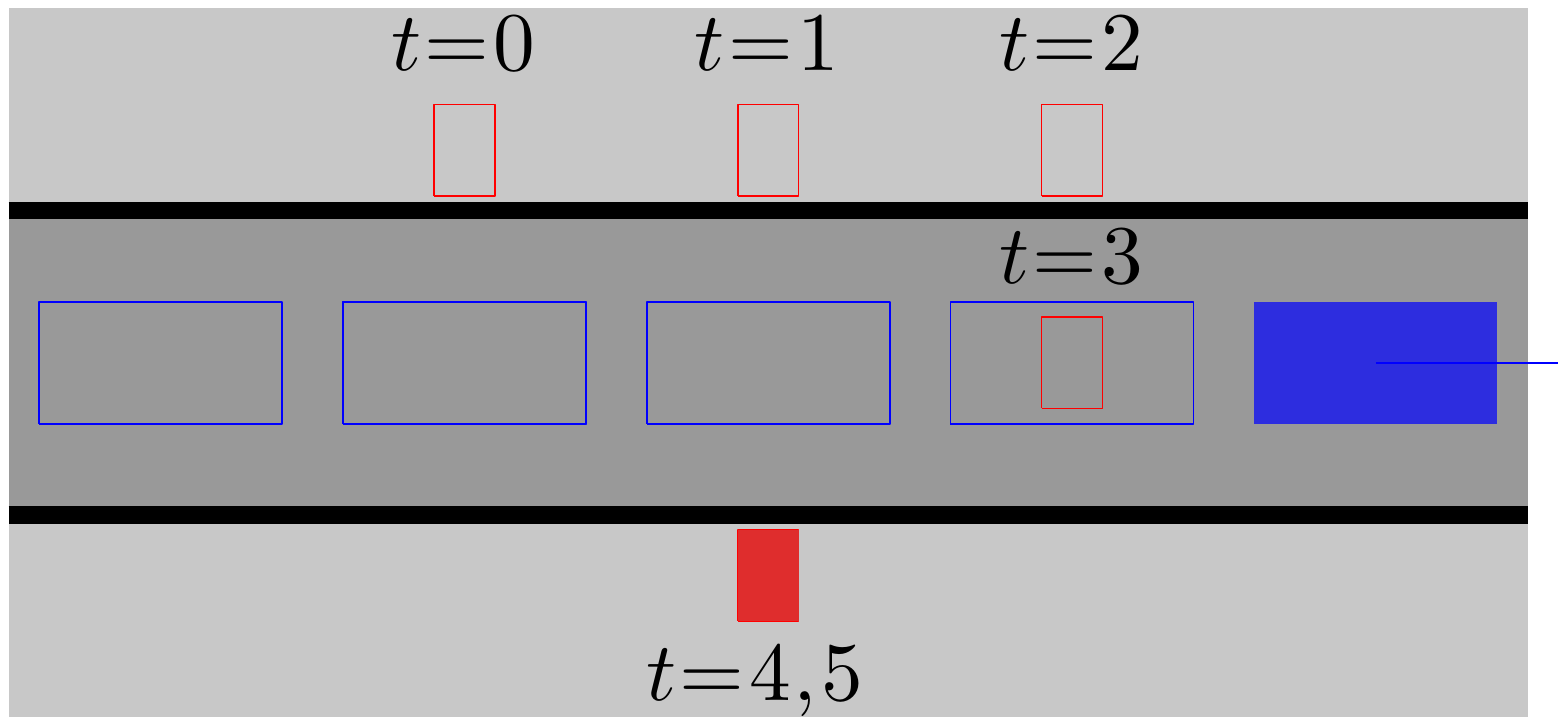}
}
\hfill
\subfigure[$t=6$]{
	\centering\includegraphics[trim=3.5cm 10cm 2.5cm 10cm, clip=true, width=0.22\textwidth, height=1.8cm]{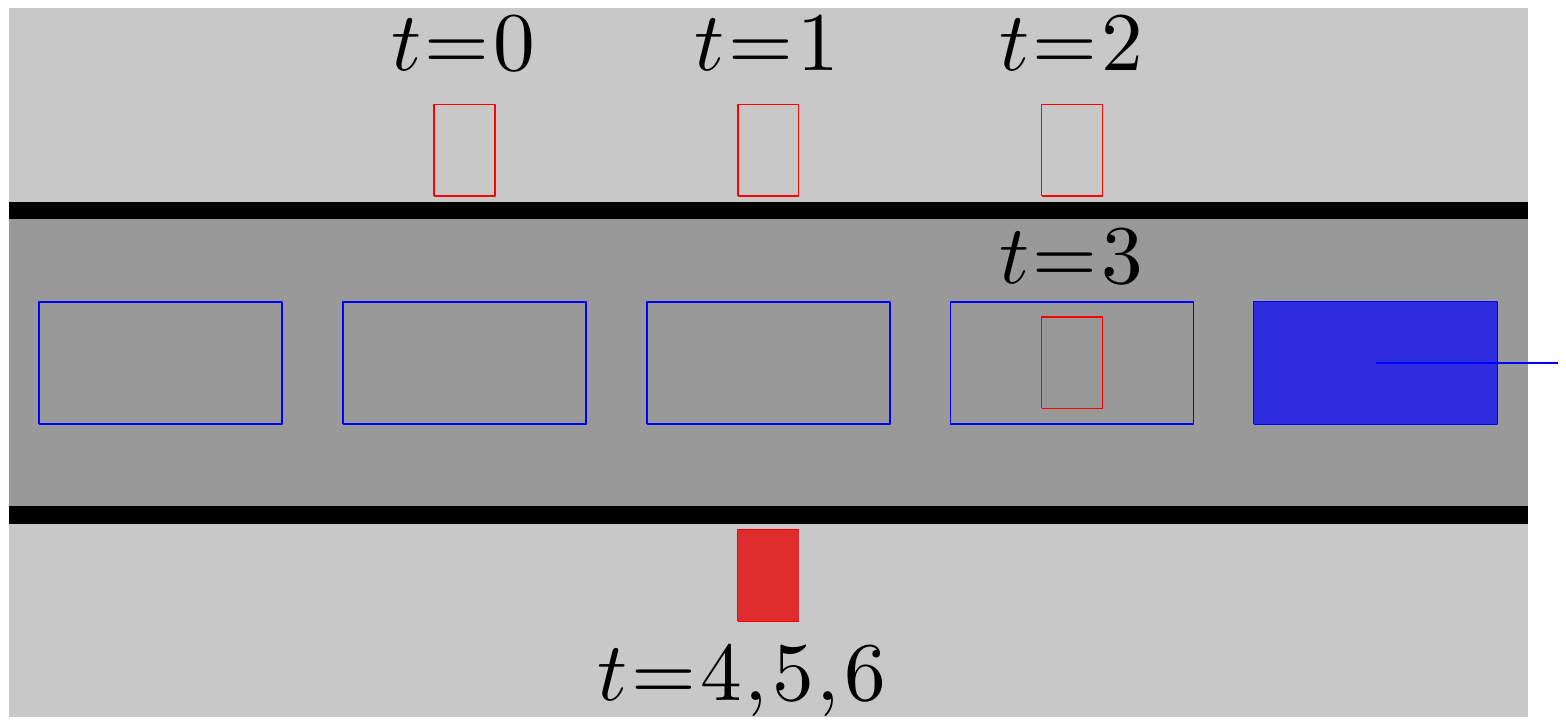}
}
\hfill
\subfigure[$t=7$]{
	\centering\includegraphics[trim=3.5cm 10cm 2.5cm 10cm, clip=true, width=0.22\textwidth, height=1.8cm]{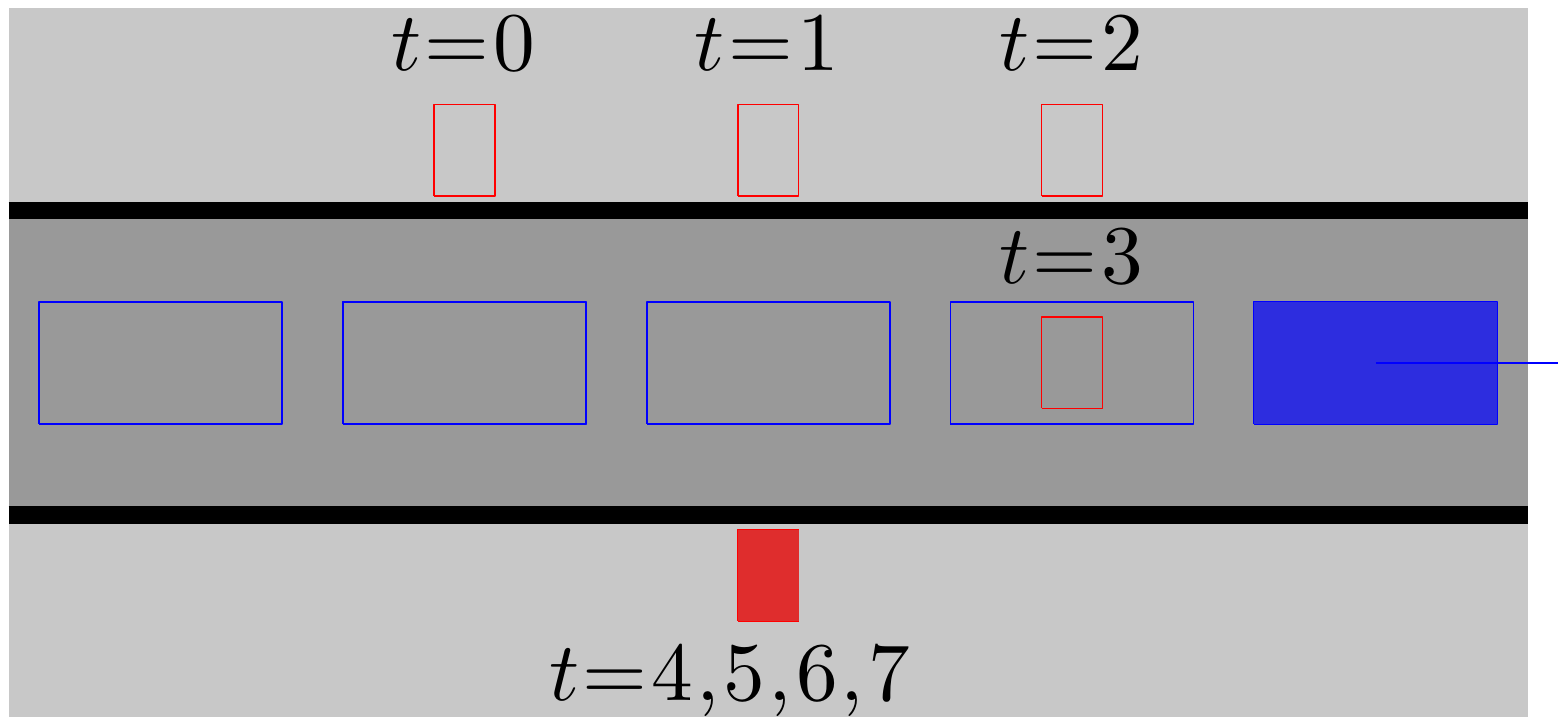}
}
\caption{Worst-case-based control policy. 
At time $0 \leq t < 2$, the vehicle applies $\alpha_1$. 
However, the vehicle moves forward during time $0 \leq t < 1$ because of the uncertainties in the vehicle model.
$\alpha_2$ is applied at time $2 \leq t < 5$, after which the vehicle reaches the goal and applies $\alpha_1$ forever. }
\label{fig:results-worst}
\end{figure}

\section{Conclusions and Future Work}
\label{sec:conclusions}
We took an initial step towards solving POMDPs that are subject to temporal logic specifications.
In particular, we considered the problem where the system interacts with its dynamic environment.
A collection of possible environment models are available to the system.
Different models correspond to different modes of the environment.
However, the system does not know in which mode the environment is.
In addition, the environment may change its mode during an execution.
Control policy synthesis was considered with respect to two different objectives:
maximizing the expected probability and maximizing the worst-case probability
that the system satisfies a given temporal logic specification.

Future work includes investigating methodologies to approximate the belief space
with a finite set of representative points.
This problem has been considered extensively in the POMDP literature.
Since the value iteration used to obtain a solution to our expectation-based synthesis problem 
is similar to the value iteration used to solve POMDP problems where the expected reward is to be maximized,
we believe that existing sampling techniques used to solve POMDP problems can be adapted to solve our problem.
Another direction of research is to integrate methodologies for discrete state estimation
to reduce the size of AMDP for the worst-case-based synthesis problem.

\section*{Acknowledgments}
The authors gratefully acknowledge Tirthankar Bandyopadhyay for inspiring
discussions.

\bibliographystyle{ieeetr}
\bibliography{ref}

\end{document}